\documentclass[11pt,a4paper]{article}

\usepackage{amsmath,amsfonts,amssymb,verbatim}
\usepackage{amssymb,amsthm}
\usepackage{graphicx}
\usepackage{url}
\usepackage{color,soul}
\usepackage{hyperref}
\usepackage{times}
\usepackage{microtype} 
\usepackage[T1]{fontenc} 
\usepackage{enumitem}
\usepackage{multirow}
\usepackage{tabularx}

\newtheorem{problem}{Problem}
\renewcommand{\vec}[1]{\mathbf{#1}}
\newcommand{\N}{\mathbb{N}}
\newcommand{\EF}{\mathcal{F}}
\newcommand{\x}{\vec{x}}
\newcommand{\e}{\vec{e}}
\newcommand{\ea}{\vec{e_{\sf a}}}
\newcommand{\eo}{\vec{e_{\sf o}}}
\newcommand{\et}{\vec{e_{\sf t}}}

\newcommand{\Reader}{\mathcal{R}}
\newcommand{\Energy}{\mathcal{E}}
\newcommand{\Impact}[2]{\vec{#1}^{\oplus #2}}

\newcommand{\com}[3]{\ensuremath{\mathsf{Compare}(#1,#2,#3)}}

\newcommand{\E}{\mathbb{E}}
\newcommand{\Inf}{\mathsf{Inf}}

\newcommand{\wkt}[1]{\mathsf{wkt}^{#1}}

\newcommand{\TIm}{\mathsf{TIm}}
\newcommand{\xor}{\mathsf{XOR}}

\newtheorem{theorem}{Theorem}
\newtheorem{corollary}{Corollary}
\newtheorem{lemma}{Lemma}
\newtheorem{definition}{Definition}

\newtheorem{claim}{Claim}

\title{Algorithmic Foundations of Inexact Computing}

\author{John Augustine\thanks{Department of Computer Science and Engineering, IIT Madras, Chennai, India. Email: \href{mailto: augustine@iitm.ac.in}{\tt   augustine@iitm.ac.in}} \and Dror Fried \thanks{Department of Mathematics and Computer Science, The Open University of Israel, Israel. Email: \href{dfried@openu.ac.il}{\tt   dfried@openu.ac.il}} 
\and Krishna V. Palem\thanks{Department of Computer Science, Rice University, United States of America. Email: \href{mailto: Krishna.V.Palem@rice.edu}{\tt   Krishna.V.Palem@rice.edu}}
\and Duc-Hung Pham \thanks{Department of Computer Science, Rice University, USA. Email: \href{mailto: hungdpham92@gmail.com}{\tt  hungdpham92@gmail.com}} \and
Anshumali Shrivastava\thanks{Department of Computer Science, Rice University, USA.  Email: \href{mailto: anshumali@rice.edu}{\tt  anshumali@rice.edu}}}






\begin{document}

\maketitle

\begin{abstract}

\emph{Inexact computing} also referred to as \emph{approximate computing}  is a style of designing algorithms and computing systems wherein
the accuracy of correctness of algorithms executing on them is deliberately traded for significant resource savings. 
Significant progress has been reported in this regard both in terms of hardware as well as 
software or custom algorithms that exploited this approach resulting in some loss in solution quality (accuracy) while garnering 
disproportionately high savings. However, these approaches  tended to be ad-hoc and were 
tied to specific algorithms and technologies.  Consequently, a principled approach to designing 
and analyzing algorithms was lacking.

In this paper, we provide a novel \emph{model} which allows us to characterize the behavior of algorithms designed to be inexact,
as well as characterize \emph{opportunities} and benefits that this approach offers. Our methods therefore are amenable
to standard 
asymptotic analysis and provides a clean unified abstraction through which an algorithm's
design and analysis can be conducted. With this as a backdrop, we show that inexactness can be significantly beneficial for some fundamental problems in that
the quality of a solution can be \emph{exponentially} better if one exploits inexactness when compared to approaches that are
agnostic and are unable to exploit this approach. We show that such gains are possible in the context of evaluating \emph{Boolean
functions} rooted in the theory of Boolean functions and their spectra~\cite{AnalysisBook2014}, PAC learning~\cite{Valiant}, and sorting. 
Formally, this is accomplished by introducing the twin concepts of \emph{inexactness aware} and \emph{inexactness oblivious} approaches to
designing algorithms and the exponential gains are shown in the context of taking the ratio of the quality of the solution using the ``aware''
approach to the ``oblivious'' approach. 



\end{abstract}

\section{Introduction}

Much of the impetus for increased performance and ubiquity of information technologies is derived from the exponential
rate at which technology could be miniaturized. Popularly referred to as Moore's law~\cite{moore}, this trend persisted from the 
broad introduction of integrated circuits over five decades ago, and was built on the promise of halving the size
of  transistors which are hardware building blocks roughly every eighteen months. As transistors started approaching 
10 nanometers in size, two major hurdles emerged and threatened the hitherto uninterrupted promise of Moore's law. First,
engineering reliable devices that provide a basis for viewing them as ``deterministic'' building blocks started becoming
increasingly hard. Various hurdles emerged ranging from vulnerability to noise~\cite{kish, palem-cmos-japan} to 
vulnerabilities such as ensuring reliable interconnections~\cite{borkar}. Additionally, smaller devices held out the allure that
more of them could be packed into the same area or volume thus increasing the amount of computational power that 
could be crammed into a single chip, while at the same time supporting smaller switching times implying faster clock
speeds characterized as \emph{Dennard scaling}. However, this resulted in  more switching activity within a given area causing greatly increased 
energy consumption, often referred to as the ``power wall''~\cite{borkar}, as well as heat dissipation needs. 

Motivated by these hurdles, intense research along dimensions as diverse as novel devices and materials such 
as graphene~\cite{graphene}, as well as fundamentally novel computing frameworks  including quantum~\cite{quantum1, quantum2, quantum3} and
DNA ~\cite{dna1, dna2} based approaches have been proposed. However, a common theme in all of these efforts is the need  to  preserve the 
predictable and repeatable  or deterministic behavior that the resulting computers exhibit, very much in keeping with Turing's original vision~\cite{turing}.
Faced with a similar predicament when digital computers were in their infancy and their components were notoriously unreliable, 
pioneers such as von Neumann advocated methods for realizing reliable computing from unreliable elements, achieved through
error correction~\cite{vonNeumann}. Thus, the march towards realizing computers which retain their impressive reliability continues
unabated.

In sharp contrast, \emph{inexact computing}~\cite{PL13, KP14}  was proposed as an unorthodox alternative to overcoming these hurdles, specifically by 
embracing ``unreliable'' hardware without attempting to rectify erroneous behavior. The resulting computing
architectures solve the problem where the \emph{quality} of the
solution is traded  for \emph{disproportionately} high savings in (energy) 
resource consumption. The counter-intuitive consequence of this approach was that by embracing hardware architectures that 
operate erroneously as device sizes shrink, and deliberately so~\cite{C08, chakrapani2007}, one could simultaneously garner energy savings! Thus,
by accepting less than accurate hardware as a design choice, we can simultaneously overcome the energy or power wall. 
Therefore, in the inexactness regime, devices and resulting computing architectures
are allowed to remain unreliable, and the process of deploying algorithms involves \emph{co-designing}~\cite{C08, chakrapani2007} them with the technology and
 the architecture. This resulted in the need for novel algorithmic methods  that trade off the quality (accuracy) of their
solutions for savings in cost and (energy) resource consumption.

To give this context, let us consider the behavior of a single inverter (gate) shown in  Figure~\ref{fig:energy-probability}. Here, the probability of correct operation $q$
of the gate is measured  as the energy consumed by the gate increases. It is interesting to note that the energy consumed \emph{increases} exponentially with $q$ 
given by the 
relationship.  
Suppose we have spend $e$ units of energy to inexactly read  a bit $b$. Due to the inexactness, the bit read is $b'$. The probability with which $b'$ differs from $b$ will depend on $e$. 
Modeled on empirically validated physical measurements\footnote{In its full form using 
CMOS characteristics, the probability of error is $p = \frac{1}{2} {\sf erfc}\left (\frac{V_{dd}}{2 \sqrt{2} \sigma} \right )$ where the error
function ${\sf erfc}(x) = \frac{2}{\sqrt{\pi}} \int_x^\infty e^{-u^2} du$.}, we will use the clean abstraction that the probability of error $p = (1-q) =  Pr [b \neq b'] = \frac{1}{2^e}$. Thus, a small decrease on the probability of correctness from the ``desired'' value of $1$ will result in a disproportionately
large savings in energy consumed~\cite{korkmaz2006}. The {\em inexact design philosophy} is to assign different amounts of energy (or other resources) strategically to different parts of the computation in order to achieve useful trade offs between energy and the quality of the outcome.

Building on the inexact design philosophy,  quite a few  results were published through architectural artifacts that enabled trading the accuracy or 
quality of a solution, notably for energy consumption. Early examples included specialized architectures for signal processing~\cite{george2006},   neural networks~\cite{avinash1} 
and floating point operations for weather prediction~\cite{avinash2, duben15}. The overarching template for these designs was that of a 
co-processor or a processor parts of which could be rendered inexact~\cite{chakrapani2007, palempatent}. In literature, inexact computing also goes by approximate computing. Mittal's survey~\cite{M15} and references therein (along with its many citations) are a testament to the broad impact of inexact/approximate computing. 
The approaches in general involved exposing the hardware features to
and customize the algorithm design to realize the solution by being cognizant of the architectural tradeoffs that the technology offered. This process was heuristic and
generally ad-hoc due to the lack of \emph{principled} methodologies for design and analysis of algorithms in this setting. 

In this paper, we aim to remedy this situation by providing a clean and simple
framework for exposing the unreliable aspects of underlying hardware to the algorithm design process through a foundational model,  amenable to rigorous mathematical
analysis. Intuitively, the more unreliable an element, the cheaper it is. Thus, the trade-off ubiquitous to our contribution in this paper is to strike the correct balance between cost and quality, the latter being the accuracy of the result. Thus, given a computing substrate which we model  below, we can design an algorithm and determine through rigorous analysis whether it meets the quality or accuracy needs.  Here, by rigorous analysis, we mean using asymptotic methods used by algorithm designers every day using $O (n)$ and $\Omega (n)$ where $n$ is the size 
of the input. 

To the best of our knowledge, the model we present in this paper is the first instance that offers a clean framework for algorithm design and analysis where the architectural and hardware variability is exposed thereby enabling us to leverage it for greater efficiency either in terms of speedup or energy consumption, or a suitable trade off between the two.
Many models were used in earlier works informally where researchers used heuristics to take a model of hardware and map an algorithm onto it while trading ``cost'' for ``quality'' (see \cite{kirthi-mooney}, \cite{avinahsh-tieman}, \cite{energy-parsimonious} for example, or through ad-hoc experimental methods \cite{atmospheric-modeling}, with some exceptions from earlier works in the limited domain of integer arithmetic ~\cite{kirthi-lakshmi, kirthi-lakshmi2, lakshmi-thesis} based on experimental findings~\cite{bilge}). In these contexts, the researchers were able to navigate a space of solutions, to reiterate heuristically and find a solution that provides the best ``quality'' or accuracy subject to a cost constraint or vice-versa. 
In contrast, the model we introduce here provides a mathematically tractable framework that is amenable for a principled approach to algorithm design and analysis through judiciously abstracting the parts of hardware variations that affect cost and quality. In so doing, we claim our model strikes a balance between providing an abstraction that provides adequate detail to capture the impact of hardware (cost) variations, while being simple enough for rigorous mathematical analysis.

We demonstrate the  value of our model in the context of analyzing  the effect of inexactness 
for a variety of fundamental algorithmic problems.
To lay the foundation for our work, we start with Boolean functions and basic operations like binary evaluation, XOR, etc. 
We next show the power of inexactness in the context of machine learning, a popular topic of interest, and of sorting, an important practical application. Using those functions, 
we provide a glimpse of
the spectrum of possible results and build the big picture that demonstrates the usefulness of the model. 

In the interest of eliciting the principles of inexactness, the model we present is mathematically  clean and provides an effective  abstraction for theoretical investigation of inexactness. 
In reality, the error probability of a complex operation can be calculated by breaking down that operation to computational steps and propagate through the computation. Such analyses quickly become mathematically complicated. We have therefore deliberately simplified our model of inexactness wherein we concentrate the effects of inexactness at the point where data is read. We believe that this simplification retains the principles of inexactness while dispensing with details that can be analyzed more naturally through simulation and experiments, which we hope to do in the future. 


\subsection{Related work}
\label{related}
There has been significant progress in inexact computing over the past fifteen years. Early foundational models~\cite{palem1, KP05, KP07} 
were aimed at modeling the complexity of algorithms akin to random access machines and circuits~\cite{AB09}, and are not well-suited
to support algorithm analysis and design.  Since then, much progress has been made in the context of inexact VLSI circuits and architectures (see~\cite{PL13} for
a historical perspective  and partial survey). Problem specific mathematical methods 
do exist for analyzing the effect of inexactness when 
specific problems are considered notably arithmetic~\cite{C08, LC08}, along with optimization problems through which  
cost-accuracy tradeoffs were explored~\cite{KKM11}. More recently, there has been quite a surge of interest in 
studying sorting using approximate operators but the associated models do not have 
an explicit associated cost dimension to optimize~\cite{BM08, KPSW11, AFHN15, GP18, G18}. 

\subsection{Roadmap of the paper}
In section \ref{model} and \ref{boolean} of this paper, we respectively describe our inexactness model in its full generality followed by a way of specifying Boolean functions using this model. We choose Boolean functions since they are at the core of understanding computational complexity and algorithmic behavior.  For decision problems based on evaluating Boolean functions, In section \ref{optimal-existence},  we show that  an optimal energy allocation  always exists. In section \ref{aware-oblivious} we look at the question of conditions under which  being aware of the importance of a variable in the Boolean function characterized through its influence helps. Thus influence becomes our parameter to base decisions on how an algorithm designer could make decisions about energy investments. This dichotomy is captured by the complementary notions of being ``influence aware'' versus ``influence oblivious'' approaches to algorithm design.  In section \ref{sec:learning}, we apply these insights in the context of the well-known  PAC learning \cite{Valiant} problem. Next, in section \ref{sec: sort} we study the difference between influence aware and influence oblivious approach in sorting.

\section{The general inexactness model}
\label{model}

\begin{figure}[h]
\centering
\includegraphics[width = 0.5\textwidth]{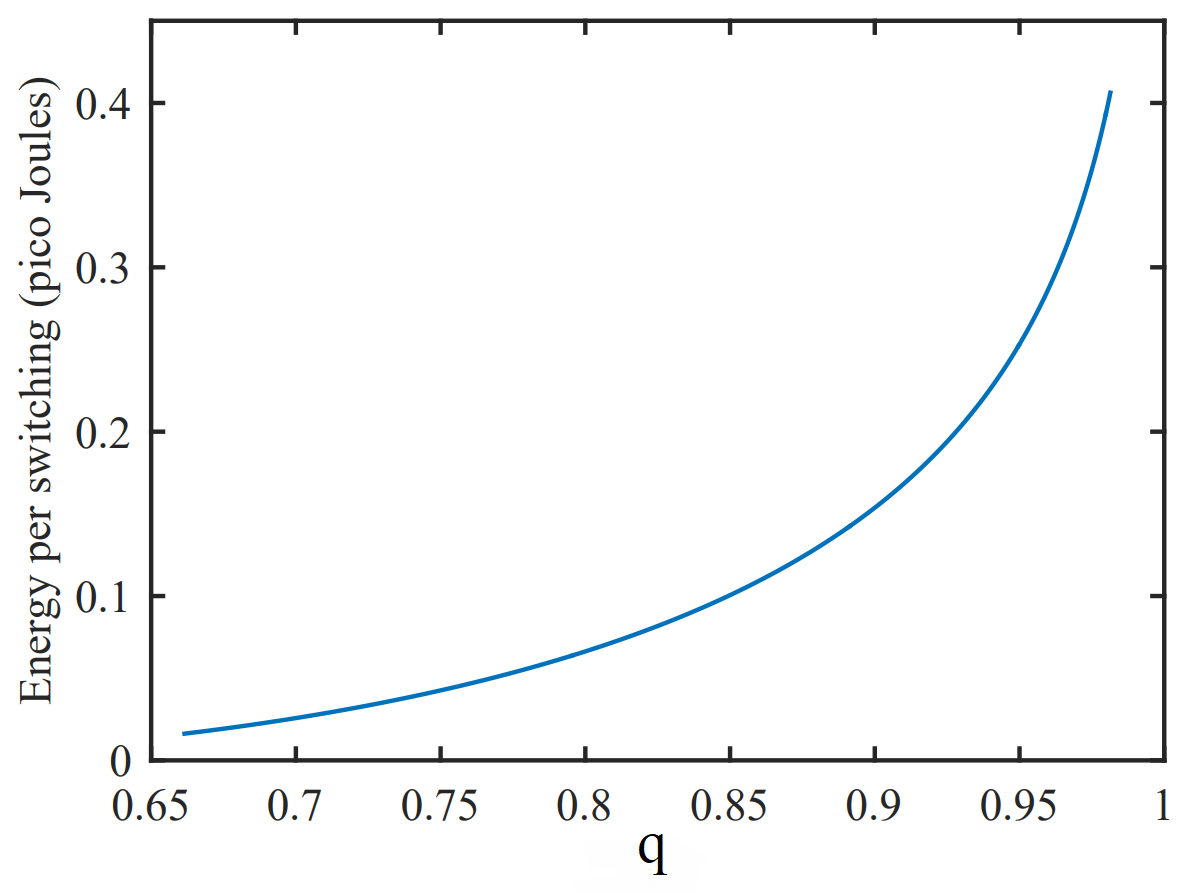}
\caption{The relationship between energy consumed and probability of correctness $q$ of a single inverter built out of CMOS technology from~\cite{korkmaz2006}}.
\label{fig:energy-probability}
\end{figure}

In inexact computing,  a   function or algorithm $f$ which could be Boolean
is computed in a noisy environment (see Figure \ref{fig:energy-probability})  where the result can be erroneous. 
To formalize this notion, 
we postulate a \textit{reader} as a function $\Reader:\{0,1\}^n\rightarrow \{0,1\}^n$ 
that ``scrambles" the data by flipping (changing from $0$ to $1$ or vice-versa) some 
of the input bits. The result of the interference of the reader is that instead of evaluating the function using ``correct'' values,  
we end up evaluating $f\circ\Reader$.

The extent to which the reader obfuscates $\x$ depends on the energy invested.
We are given an energy 
\emph{budget}  $\Energy\geq 0$ that can be apportioned  into a vector of $n$ elements 
$\vec{e}=(e_1,\cdots e_n)$ while ensuring that $\sum_{i=1}^ne_i \le  \Energy$. 
Each of the $e_i$'s determines the  probability with which our reader provides 
incorrect values of  the corresponding $\x_i$. This effect is characterized by a transformation 
$\EF:\mathbb{R}\rightarrow [0,1]$ such that  for every $i$, the reader flips  bit $i$ 
with probability $p_i=\EF(e_i)$, namely  with probability $q_i = 1-p_i$, $\x_i$ must be read correctly. 
In keeping with measured behavior of CMOS devices outlined above,  $\EF(e_i)=1/2^{e_i}$. Clearly,  the bigger $e_i$ is, 
the \textit{lower}  the chance that bit $i$ is flipped, and  $\vec{p}=(p_1,\cdots p_n)$.
Note that $\vec{p}$ is not a probability  but rather, each $p_i$  is. 

As mentioned in the previous section, this model is inspired by the behavior of physical gates such as the inverter as shown in Figure \ref{fig:energy-probability}, or NAND gate as mentioned in \cite{C08}, where an approximately exponential relationship between the error probability and the energy investment (such as energy investment in switching the value of the bit for Probability CMOS switch) was observed. The probability that an error occurs in a computational gate can be abstracted to be the probability of error of reading input bits to that gate. 
In reality, the error probability of a complex operation can be calculated by breaking down that operation to computational steps, and aggregate the error probability throughout the computational steps. However, analyzing at such level of details will quickly become infeasible. Our approach is to abstract away the details and place the error probability in certain key points in the computation. We believe this abstraction strikes the right balance between capturing what is essential on the one hand, while on the other hand retaining a level of simplicity in the model that allows researchers to be able to analyze algorithmic ideas.

\section{Modeling inexactness in the context of Boolean functions}
\label{boolean}
The previous section proposes the general model for inexactness in the general settings. In this section we want to examine the model further in the context of Boolean functions, a fundamental component of computer science theory and practice. In addition, for the next parts of the paper let us consider a more general version of Boolean functions $f:\{0,1\}^n\rightarrow \N$, because this version of Boolean functions are remarkably more popular in computing.

\emph{The overarching theme of  this paper is  inspired by the following:
Given a function $f:\{0,1\}^n\rightarrow \N$, an energy budget  $\Energy$, and a transformation function $\EF$, 
what is the optimal way to distribute $\Energy$ to $n$ segments in order to 
minimize the obfuscation of the reader overall as $f$ is computed. }

Consider an $n$-bit binary vector denoted  $\x = (x_1, x_2, \ldots, x_n) \in \{0,1\}^n$. For an index 
$i\in [n]$, where $[n]$ denotes $\{1, 2, \ldots, n\}$, we use $\Impact{x}{i}$ to denote the vector  that is identical to $\x$ apart from 
the bit $i$, which is ``flipped'' to $1-x_i$. Similarly,
$\x^{(i\rightarrowtail 0)}$ and $\x^{(i\rightarrowtail 1)}$ denote the vectors identical to
 $\x$ with changing only $x_i$ to either $0$ or $1$ respectively, and 
$\x\sim\{0,1\}^n$ denotes a random value $\x$ \emph{uniformly} drawn from 
$\{0,1\}^n$ and so $\x \sim \{0,1\}^n$.
The key concept of  {\em influence} of the $i$th bit for a function $f:\{0,1\}^n\rightarrow \N$  is \[\Inf(i)\triangleq |(f(\x)-f(\x^{\oplus i}))|,\] where $\x$ is drawn uniformly from $\{0,1\}^n$. 
For convenience, we will refer to $\Inf(i)$ and the \emph{influence} of index $i$ without explicitly 
referring to $\x$ when there is no ambiguity. We note that here, we differ  from the
traditional definition of influence~\cite{AnalysisBook2014} which is the expectation of $f:\{0,1\}^n\rightarrow \N$  \emph{with respect to} $\x$
over all uniformly drawn vectors $\x$. However, it is technically more convenient in our case to explicitly express this expectation as $\E [\Inf(i)]$ 
averaged over all uniformly drawn vectors $\x$ and we will adopt this convention in the sequel. 
Furthermore, for convenience, we arrange the input bits so that $\E[\Inf(i)] \le \E[\Inf(i+1)]$ for all $i < n$.

To understand the value of inexactness, influence and its expectation gives us the potential impact of assigning energy to a certain index $i$
preferentially over another index $j$ on the \emph{quality} of the answer. Informally, we wish to assign more energy to variables associated with
indices that have greater expected influence. To formalize this idea, let us define 
the \emph{total impact} of a function  $f$ given an energy vector $\vec{e}=(e_i)_{i\le n}$ and  with induced error probabilities $\vec{p} = (p_i)_{i\le n} = (2^{-e_i})_{i\le n}$ to be 
\[\TIm_f(\vec{p})=\sum_{i\leq n} \E [\Inf(i)]\cdot p_i\] 
We can then use total impact as the measure of how far from the correct values the function drifts given a particular 
energy vector $\vec{e}$. 
Now, given a function $f$ and an energy budget $\Energy$, our goal is to find $\vec{e}=(e_i)_{i\le n}$ that gives the best quality and thus \emph{minimize} 
$\TIm_f$. 

The most obvious and  naive  approach is to consider an energy vector that is \emph{influence oblivious}  
 where we allocate the energy equally to all the indices and therefore,  
$p_i=2^{-\Energy/n}$ for every $i$; this corresponds to the traditional architectural design that treats all bits equally.  In this case,  the \textit{expected total influence oblivious impact}, is 
 \begin{equation}
 \TIm_f(\vec{p}_O) =  2^{-\Energy/n}\sum_{i\leq n}\E[\Inf(i)]
\end{equation}
In contrast, an \emph{influence aware} allocation would 
be guided by the influence values to where indices with higher influence are assigned ``proportionately'' higher energy. Let $\TIm_f(\vec{p}_A)$ be the
\textit{expected total influence aware impact}. Then, to understand the value of inexactness in the context of a function $f$, we define 
a figure of merit  
\begin{equation}
    \alpha = \frac{\TIm_f(\vec{p}_{O})}{\TIm_f(\vec{p}_{A})}
\end{equation} 
the ratio of the total
impact of the oblivious assignment (numerator) to the aware assignment (denominator).  Intuitively, the closer  $\alpha$  is to 1, the less profitable it is to be influence aware as the 
naive  influence oblivious solution can suffice almost as well. Conversely, $\alpha$  being 
large is a strong indication that influence aware solutions are likely to have a much higher 
impact on the  quality of the solution.


To understand this point, let us consider a simple example of evaluating a binary string. Due to binary representation, the impact of an error grows as we progress from the least significant bit to the most significant bit. Thus, we should expect an influence oblivious approach
to perform poorly when compared to one which is influence aware. 
To capture this notion of increased ``weight''
ubiquitous to computer science due to binary numbers we will compare influences as we step through the indices and 
 define 
\begin{equation}\label{eq:betai}
 \beta_i \triangleq \frac{\E[\Inf(i+1)]}{\E[\Inf(i)]}. 
\end{equation}
where $\beta_i$ is the relative influence of index $i+1$ compared to $i$. A straightforward observation is to note that
functions with $\beta_i = 1$ for all $1\le i \le n$ are functions where all the indices are equally influential; we will refer to such functions as being
\emph{influence symmetric}; classical problems such as \emph{parity} and the \emph{OR} function are examples.
 In contrast, \emph{influence asymmetric} functions have $\beta_i > 1$ for some indices $i$. 
 We are particularly interested in functions where all $\beta_i$ values are equal and denoted $\beta$.

\section{Existence of an Optimal Energy Assignment for any Boolean Function}
\label{optimal-existence}

To capture the possible benefits of inexactness aware approaches precisely, we formulate the following problem since the model is new
and we wish to characterize its properties. 

\begin{problem}
\label{prob:inexact}
[Basic Inexactness Problem] We define our {\em basic inexactness problem} comprising a
basic inexact problem instance  and an optimization criterion.
	The  {\em basic inexactness problem instance} is a tuple $(f, \Energy, \EF)$ where 
	$f:\{0,1\}^n\to \N$ is a 
	Boolean function, 
	$\Energy$ is the inexactness energy amount, and 
	$\EF$ is the energy  translation function.
	Given the inexactness problem  instance $(f, \Energy, \EF)$, our {\em optimization criterion} 
	is to find an energy vector 
	$(e_1,\cdots, e_n)$ whose elements sum to at most $\Energy$ (i.e., $\sum_i{e_i}\le \Energy$) 
    such that if 
	$\vec{p}=(p_1,\cdots, p_n)$ where $p_i = \EF(e_i)$ for every $i$, then $\TIm_f(\vec{p})$ 
	is minimized.
\end{problem}

Without loss in generality, we assume that $\EF(e) = 1/2^e$.  We will now show that an 
optimal solution always exists.
\begin{theorem}\label{thm:optim}
For	every inexactness problem with any given any $\Energy$,  a solution that minimizes
the total impact exists and can be computed. 
\end{theorem}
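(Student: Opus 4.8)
The plan is to split the claim into its two parts, existence of a minimizer and its computability, and to treat existence by compactness. Writing $c_i \triangleq \E[\Inf(i)] \ge 0$ and recalling $\EF(e)=2^{-e}$, the quantity to be minimized, as a function of the energy vector $\vec{e}=(e_1,\dots,e_n)$, is $\TIm_f(\vec{p}) = \sum_{i\le n} c_i\, 2^{-e_i}$. The feasible set $\{\vec{e} : e_i\ge 0 \text{ for all } i,\ \sum_i e_i \le \Energy\}$ is closed and bounded in $\mathbb{R}^n$, hence compact, and $\TIm_f$ is continuous in $\vec{e}$ since each $2^{-e_i}$ is. The Weierstrass extreme value theorem then guarantees the minimum is attained, which already settles existence.

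For computability I would exploit convexity. Each term $c_i 2^{-e_i}$ is convex in $e_i$ (as $2^{-e}=e^{-e\ln 2}$ is convex and $c_i\ge 0$), so $\TIm_f$ is convex and the constraints are linear; the problem is therefore a convex program, for which the Karush--Kuhn--Tucker conditions are both necessary and sufficient. Because $2^{-e_i}$ is strictly decreasing, the budget constraint is tight at any optimum, $\sum_i e_i=\Energy$. Forming the Lagrangian $L=\sum_i c_i 2^{-e_i}+\lambda\bigl(\sum_i e_i-\Energy\bigr)$ and differentiating yields the stationarity condition $c_i(\ln 2)\,2^{-e_i}=\lambda$ for every index with $e_i>0$. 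Thus at the optimum the products $c_i\,2^{-e_i}$ are equalized across the active indices, and one solves explicitly as $e_i=\log_2\bigl((c_i\ln 2)/\lambda\bigr)$, with $\lambda$ pinned down by substituting back into $\sum_i e_i=\Energy$. This produces a closed-form candidate and, when all $c_i>0$, the objective is in fact strictly convex, so the minimizer is unique.

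The remaining subtlety, and the step I expect to be the main obstacle, is the nonnegativity constraints $e_i\ge 0$ (equivalently $p_i\le 1$). The closed-form expression can assign $e_i<0$ to indices of very small influence $c_i$, which is infeasible. I would handle this with a standard water-filling argument: clamp every violating index to $e_i=0$ (spending no energy where influence is negligible), drop it from the active set, and re-solve the stationarity system on the remaining indices. Each round fixes at least one variable, so the procedure terminates in at most $n$ rounds, and convexity certifies that the resulting point satisfies the full KKT conditions and is hence the global minimizer. One should also dispatch the degenerate case $c_i=0$, where no energy need be invested, and note that although the optimal $e_i$ may be irrational, they are given in closed form and thus computable to arbitrary precision, which is what ``can be computed'' requires.
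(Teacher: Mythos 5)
Your proof is correct, but it takes a genuinely different route from the paper's. The paper converts the additive budget constraint $\sum_i e_i = \Energy$ into the multiplicative constraint $\prod_i p_i = 2^{-\Energy}$ and applies the AM--GM inequality to $\sum_i \E[\Inf(i)]\,p_i$, reading the minimizer off the AM--GM equality condition---all products $\E[\Inf(i)]\,p_i$ equal---which yields the closed form $p_i = \bigl(\prod_{j\le n}\E[\Inf(j)]\,2^{-\Energy}\bigr)^{1/n}\big/\,\E[\Inf(i)]$. Your KKT stationarity condition $c_i(\ln 2)\,2^{-e_i}=\lambda$ is exactly this equalization, so both arguments pin down the same interior optimum; the AM--GM route buys a short, elementary, fully explicit derivation, while your convex-programming route buys generality (any convex, strictly decreasing $\EF$ would work) and, crucially, correctness at the boundary. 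In fact, the ``subtlety'' you flagged is a genuine gap in the paper's own proof that your argument repairs: the paper's closed-form $p_i$ can exceed $1$ (equivalently $e_i<0$) whenever some $\E[\Inf(i)]$ lies sufficiently far below the geometric mean of the influences, violating the constraint $0<p_i\le 1$ that the paper itself lists in its reformulated problem, and the formula is undefined when some $\E[\Inf(i)]=0$; the paper addresses neither case, whereas your water-filling step (clamp violating indices to $e_i=0$, re-solve on the active set, certify via KKT sufficiency in at most $n$ rounds) and your explicit treatment of $c_i=0$ handle both. Your separate Weierstrass argument for existence is also cleaner than the paper's, which never distinguishes existence from the derivation of the formula and so tacitly assumes the equalizing point is feasible.
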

\begin{proof}
Since $p_i=1/2^{e_i}$, we have that $e_i=-\log_2(p_i)$.
Therefore for the constraint $\sum_{i\leq n}e_i = \Energy$ we have:
\begin{equation}
 \sum_{i\leq n}e_i = \Energy \iff \sum(\log(p_i)) = -\Energy \iff  \log(\prod p_i) = -\Energy \iff  
 \prod_{i\leq n} p_i = 1/{2^\Energy}
\end{equation}

Therefore we can redefine the inexact problem in Definition \ref{prob:inexact} as follows

\begin{problem}\label{prob:GP}
	The  problem denoted by $GP(f,\Energy,\EF)$ is to find $\vec{e}=(e_1,\cdots,e_n)$ such that
	\begin{itemize}
		\item $\sum_{i\leq n} \E[\Inf(i)]p_i$ is minimized
		\item $p_i=1/2^{e_i}$
	    \item $\prod_{i\leq n} p_i = 1/{2^\Energy}$
		\item $0 < p_i\leq 1$ for every $i\leq n$ ($p_i$ cannot be $0$)
	\end{itemize}
	
\end{problem}

To solve Problem \ref{prob:GP} we use the AM-GM inequality according to which for every non-negative reals $a_1,\cdots, a_n$ we have.
\begin{equation}
\frac{1}{n}\sum_{1\leq n} a_i \geq (\prod_{i\leq n} a_i)^{\frac{1}{n}}
\end{equation}
Since all the $p_i$ and the $\E[\Inf(i)]$ are non-negative, we can apply the AM-GM inequality to get:
\begin{equation}
\frac{1}{n}\sum_{i\leq n} \E[\Inf(i)]p_i \geq (\prod_{i\leq n} \E[\Inf(i)]p_i)^{\frac{1}{n}}
\end{equation}
Thus,
\begin{equation}
\sum_{i\leq n} \E[\Inf(i)]p_i \geq n(\prod_{i\leq n} \E[\Inf(i)]\prod_{i\leq n}p_i)^{\frac{1}{n}}
\end{equation}
 Since we have a constraint that  $\prod_{i\leq n} p_i = 1/{2^\Energy}$, we all in all have that:
\begin{equation}
\sum_{i\leq n} \E[\Inf(i)]p_i \geq n(\prod_{i\leq n} \E[\Inf(i)] 2^{-\Energy})^{\frac{1}{n}}
\end{equation}

Recall that we need to find values for $p_i$ such that $\sum_{i\leq n} \E[\Inf(i)]p_i$ is minimized.
Since no matter what values of $p_i$ we choose, the left side of the equation above will always be at least as the right side of the equation, minimization will come only when both sides are equal.
In AM-GM we have that equality is reached when $\E[\Inf(i)]p_i$ is identical for all $i$.
Using this we can now establish $p_i$ as follows.
Assuming $\E[\Inf(i)]p_i=k$ for every $i$, we get
\begin{equation}
nk = n(\prod_{i\leq n} \E[\Inf(i)]2^{-\Energy})^{\frac{1}{n}}.
\end{equation}
Thus,
\begin{equation}
k = (\prod_{i\leq n} \E[\Inf(i)] 2^{-\Energy})^{\frac{1}{n}}.
\end{equation}
Therefore for every $i\leq n$ we have 
\begin{equation}
p_i= (\prod_{i\leq n} \E[\Inf(i)] 2^{-\Energy})^{\frac{1}{n}}/ \E[\Inf(i)]
\end{equation}
and setting $e_i=-\log(p_i)$ solves the problem as required.
\end{proof}
It is  not always clear how such an influence aware
energy assignment can be efficiently computed. Even the task of determining whether $\E[\Inf(i)]>0$ 
for a bit $i$ is a co-NP-hard problem as it encompasses asking whether the given CNF 
formula has no satisfying assignment.

\section{Where does inexactness help?}
\label{aware-oblivious}
We have seen that for a Boolean function, there always exists an optimal energy vector that minimizes the total impact. We now  ask when exactly it pays to be influence aware? We shed some light into this question by examining the ratio $\alpha$ of the two extreme cases where for all $i$ $\beta_i = \beta$ a constant greater than 1, and the case where $\beta$ is 1. 
Recall that  $\alpha$ is  the ratio between the expected total influence oblivious impact and the expected total influence aware impact and $\beta_i$ is the ratio between the expected influence of bit $i+1$ and bit $i$.

For a given inexactness problem $(f, \Energy, \EF)$, let 
$\vec{e}=(e_1, e_2, \ldots, e_n)$
be the optimal energy vector (with corresponding error probability vector 
$\vec{p} = (p_1, p_2, \ldots, p_n)$)
 obtained by the influence aware solution.  

\paragraph{Influence Aware Investments}\label{sec:influence}

We now  focus on the important case when all the $\beta_i$ values equal a common constant value
$\beta$. This  special case is in fact quite common and is exemplified by our previous example of evaluating a 
binary bit string where the influence of bit values decrease exponentially, in the context of binary numbers. 


\begin{theorem} \label{thm:alphabeta}
Let $f:\{0,1\}^n \to \mathbb{N}$ be a Boolean function with parameter $\beta > 1$. 
Then, the corresponding $\alpha$ is at least $\Omega\left(\frac{\beta^{n/2}}{n}\right)$, thereby implying
that an influence aware investment is exponentially better (with respect to $n$) than its influence
oblivious counterpart.
\end{theorem}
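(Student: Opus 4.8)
The plan is to use the hypothesis $\beta_i = \beta$ for all $i$ to turn the expected influences into a geometric progression, and then to read off both $\TIm_f(\vec{p}_O)$ and $\TIm_f(\vec{p}_A)$ in closed form --- the former being a geometric \emph{sum}, the latter a geometric \emph{mean} delivered directly by Theorem~\ref{thm:optim}. Writing $I_i \triangleq \E[\Inf(i)]$, the condition $\E[\Inf(i+1)]/\E[\Inf(i)] = \beta$ for every $i$ forces $I_i = I_1\,\beta^{\,i-1}$. This single normalization is what collapses the abstract ratio $\alpha$ into something I can estimate.

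First I would evaluate the numerator directly: since the oblivious allocation sets every $p_i = 2^{-\Energy/n}$, the geometric-series identity gives
\[
\TIm_f(\vec{p}_O) \;=\; 2^{-\Energy/n}\sum_{i\le n} I_i \;=\; 2^{-\Energy/n}\,I_1\,\frac{\beta^{n}-1}{\beta-1}.
\]
Next I would evaluate the denominator using Theorem~\ref{thm:optim}, which says the aware optimum equalizes $I_i p_i$ to the common value $k=\bigl(\prod_{i\le n} I_i\cdot 2^{-\Energy}\bigr)^{1/n}$, so that $\TIm_f(\vec{p}_A)=nk$. The decisive computation is the geometric mean of the influences: because $\sum_{i=1}^{n}(i-1)=n(n-1)/2$, one gets $\bigl(\prod_{i\le n} I_i\bigr)^{1/n}=I_1\,\beta^{(n-1)/2}$, and hence $\TIm_f(\vec{p}_A)=n\,I_1\,\beta^{(n-1)/2}\,2^{-\Energy/n}$.

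Forming $\alpha$, the common factors $2^{-\Energy/n}$ and $I_1$ cancel, leaving
\[
\alpha \;=\; \frac{\beta^{n}-1}{n(\beta-1)\,\beta^{(n-1)/2}} \;=\; \frac{\beta^{(n+1)/2}-\beta^{-(n-1)/2}}{n(\beta-1)}.
\]
Since $\beta>1$ is a constant, I would finish with the crude bound $\beta^{n}-1\ge \tfrac12\beta^{n}$ (valid once $\beta^{n}\ge 2$), giving $\alpha \ge \beta^{(n+1)/2}/\bigl(2n(\beta-1)\bigr)$; absorbing the constant $\beta^{1/2}/(2(\beta-1))$ into the asymptotics yields $\alpha=\Omega\!\bigl(\beta^{n/2}/n\bigr)$, as claimed.

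The step I expect to be the real obstacle is \emph{feasibility} of the aware optimum, and I would treat it explicitly rather than sweep it aside. Theorem~\ref{thm:optim} derives $k$ from the AM--GM equality case alone and does not enforce $p_i\le 1$; but $p_i=k/I_i$ is largest at the least-influential index $i=1$, so the closed form $\TIm_f(\vec{p}_A)=nk$ is legitimate only when $k\le I_1$, i.e.\ when $\Energy\ge \tfrac{n(n-1)}{2}\log_2\beta$. Outside this energy regime the constraint $p_i\le1$ binds, the true constrained minimum is strictly larger than $nk$, and in fact as $\Energy\to 0$ one checks $\alpha\to 1$ --- so the stated bound genuinely requires the energy to be large enough (equivalently, it requires interpreting $\vec{p}_A$ as the unconstrained Theorem~\ref{thm:optim} allocation). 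I would therefore either record $\Energy\ge \tfrac{n(n-1)}{2}\log_2\beta$ as a standing hypothesis or make that convention explicit; everything else is the geometric-series and geometric-mean bookkeeping inherited from Theorem~\ref{thm:optim}.
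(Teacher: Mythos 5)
Your proof is correct and follows essentially the same route as the paper's: both reduce $\alpha$ to the ratio $AM/GM$ of the influences $I_i = I_1\beta^{i-1}$, evaluate the geometric sum and geometric mean in closed form (with the common factor $2^{-\Energy/n}$ cancelling), and arrive at exactly the paper's expression $\alpha = \frac{\beta^n-1}{n(\beta-1)\beta^{(n-1)/2}} = \Omega\left(\frac{\beta^{n/2}}{n}\right)$. Your feasibility caveat --- that the equalizing allocation from Theorem~\ref{thm:optim} has $p_1 = \beta^{(n-1)/2}2^{-\Energy/n}$, so the closed form $\TIm_f(\vec{p}_A) = nk$ is attainable only when $\Energy \ge \frac{n(n-1)}{2}\log_2\beta$ --- is a correct and worthwhile refinement that the paper's proof passes over in silence.
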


\begin{proof}
We first have that
$$GM = (\prod_{i<n} \E[\Inf(i)])^{1/n} = (\E[\Inf(1)]^n \prod_{i=0}^{n-1}\beta^i)^{1/n} = \E[\Inf(1)]\beta^{(n-1)/2}$$
and

$$AM = 1/n(\sum_{i\leq n}\E[\Inf(i)]) = 1/n(\E[\Inf(1)]\sum_{i=0}^{n-1}\beta^i) = \frac{\E[\Inf(1)]}{n} \frac{\beta^n-1}{\beta-1}$$

Therefore we have that

$$\alpha = AM/GM = \frac{\beta^n-1}{n\beta^{(n-1)/2}(\beta-1)}$$

This ratio is $\Omega(\frac{\beta^{n/2}}{n})$ when $\beta > 1$.
\end{proof}

To continue the example of evaluating $n$-bit binary strings, we present the following.  

\begin{corollary}\label{exm:BE}
The value of $\alpha$ for \textit{Binary Evaluation (BE) function} $f:\{0,1\}^n\to \mathbb{N}$ that 
takes a binary input and returns the decimal evaluation of that input is at least $\Omega\left(\frac{2^{n/2}}{n}\right)$.
\end{corollary}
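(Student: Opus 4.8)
The plan is to reduce the corollary directly to Theorem~\ref{thm:alphabeta} by showing that the Binary Evaluation function is an influence-asymmetric function whose consecutive-influence ratios are all equal to the common constant $\beta = 2$. Once this is established, substituting $\beta = 2$ into the bound $\Omega(\beta^{n/2}/n)$ from the theorem immediately yields $\Omega(2^{n/2}/n)$. In other words, essentially all of the analytic content is inherited from Theorem~\ref{thm:alphabeta}, and the only genuine task that remains is to compute the (expected) influences of the individual bits of the BE function.

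First I would fix the indexing so that bit $i$ carries positional weight $2^{i-1}$, i.e. bit $1$ is the least significant bit; this choice respects the convention fixed earlier that $\E[\Inf(i)] \le \E[\Inf(i+1)]$. Writing $f(\x) = \sum_{j \le n} x_j\, 2^{j-1}$, I observe that flipping bit $i$ alters exactly one term of this sum, either adding or subtracting $2^{i-1}$ according to whether $x_i$ was $0$ or $1$. Hence $|f(\x) - f(\Impact{x}{i})| = 2^{i-1}$ for \emph{every} input $\x$, so the influence of bit $i$ is in fact deterministic (independent of $\x$), and therefore $\E[\Inf(i)] = 2^{i-1}$.

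With these influences in hand I would compute the ratios $\beta_i = \E[\Inf(i+1)]/\E[\Inf(i)] = 2^{i}/2^{i-1} = 2$ for every $i < n$. Thus the BE function has all its $\beta_i$ equal to the common constant $\beta = 2 > 1$, placing it exactly in the regime covered by Theorem~\ref{thm:alphabeta}. Invoking that theorem with $\beta = 2$ gives $\alpha = \Omega(2^{n/2}/n)$, which is the claimed bound.

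There is no serious obstacle here: the exponential separation is wholly a consequence of Theorem~\ref{thm:alphabeta}, and this corollary is really just the remark that binary place value forces the influences into a geometric progression with ratio exactly $2$. The one point meriting minor care is the choice of indexing, namely ordering the bits least-significant-first so that the influences are non-decreasing as the convention demands; with the opposite ordering one would instead read off $\beta_i = 1/2$ and would have to re-index before the theorem could be applied.
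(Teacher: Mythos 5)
Your proposal is correct and matches the paper's intent exactly: the paper states the corollary as an immediate consequence of Theorem~\ref{thm:alphabeta}, with the Binary Evaluation function supplying $\E[\Inf(i)] = 2^{i-1}$ deterministically and hence $\beta = 2$. Your added care about least-significant-first indexing is a sound (and welcome) explicit check of the paper's ordering convention, but the argument is otherwise the same.
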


\paragraph{Influence Oblivious Investments}\label{sec:infFunctions}
 
To reiterate, formally, a Boolean function $f$ is  influence-symmetric if all of the bits 
of $f$ have the same influence (i.e.,  $\Inf(i)=\Inf(j)$ for every $i \ne j$). 
Recall that from the definition of  $\alpha$,   we see that if $f$ is an influence-symmetric function 
then $\alpha=1$. An important class of  Boolean functions are the \textit{symmetric Boolean functions} 
defined as the set of functions $f$ such that for all $\x$ and any permutation $\sigma$, 
$f(\x) = f(\sigma(\x))$ and therefore  changing the order of the bits does not 
change the output of the function. We now have:  

\begin{theorem}\label{cor:sym}
The influence oblivious assignment is an optimal energy distribution for 
influence-symmetric Boolean functions. Furthermore, every symmetric function is also influence-symmetric,
so an influence oblivious investment is optimal.
\end{theorem}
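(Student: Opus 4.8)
The plan is to handle the two assertions separately, both resting on the optimality characterization from Theorem~\ref{thm:optim}. For the claim that the oblivious assignment is optimal for influence-symmetric functions, I would invoke the fact established there that an optimal energy vector is characterized by $\E[\Inf(i)]\cdot p_i$ being constant across all indices $i$. When $f$ is influence-symmetric, all the quantities $\E[\Inf(i)]$ share a common value, say $c$; the equalization condition then forces $p_1 = \cdots = p_n$, and the budget constraint $\prod_{i\le n} p_i = 2^{-\Energy}$ determines this common value as $p_i = 2^{-\Energy/n}$. This is exactly the influence oblivious vector $\vec{p}_O$, so the optimal aware vector and the oblivious vector coincide; equivalently, $\TIm_f(\vec{p}_O) = \TIm_f(\vec{p}_A)$ and $\alpha = 1$. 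Hence the oblivious assignment attains the minimum total impact and is optimal.

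For the claim that every symmetric function is influence-symmetric, I would argue by a change of variables. Fix indices $i \ne j$ and let $\sigma$ be the transposition that swaps coordinates $i$ and $j$; since $\sigma$ is a bijection of $\{0,1\}^n$ it preserves the uniform distribution, so after substituting $\x = \sigma(\y)$ we obtain $\E[\Inf(j)] = \E_{\y}\!\left[\,|f(\sigma(\y)) - f(\sigma(\y)^{\oplus j})|\,\right]$. The key identity is $\sigma(\y)^{\oplus j} = \sigma(\y^{\oplus i})$: flipping coordinate $j$ after swapping $i$ and $j$ produces the same vector as flipping coordinate $i$ before the swap. Using the symmetry hypotheses $f(\sigma(\y)) = f(\y)$ and $f(\sigma(\y^{\oplus i})) = f(\y^{\oplus i})$, the right-hand side collapses to $\E_{\y}\!\left[\,|f(\y) - f(\y^{\oplus i})|\,\right] = \E[\Inf(i)]$, so $\E[\Inf(j)] = \E[\Inf(i)]$. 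Since $i,j$ were arbitrary, $f$ is influence-symmetric, and the first part then yields optimality of the oblivious assignment.

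The main obstacle is establishing the commutation identity $\sigma(\y)^{\oplus j} = \sigma(\y^{\oplus i})$ without slip. This amounts to a careful coordinate-by-coordinate check: at coordinate $i$ both vectors carry $y_j$, at coordinate $j$ both carry $1 - y_i$, and at every remaining coordinate $k$ both carry $y_k$. Once this ``swap-then-flip equals flip-then-swap'' bookkeeping is settled, the rest is a routine substitution together with the symmetry of $f$, and no further estimates are needed.
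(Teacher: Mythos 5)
Your proposal is correct, but the second half takes a genuinely different route from the paper. To show symmetric $\Rightarrow$ influence-symmetric, the paper exploits the fact that a symmetric function depends only on the Hamming weight of its input: setting $d_j \triangleq f(x^j)$ for an input $x^j$ with exactly $j$ ones, it computes the full distribution of $\Inf(i)$ explicitly, namely $\Pr[\Inf(i)=a]=\sum_{j\in J_a}\binom{n-1}{j-1}/2^n$ where $J_a$ is the set of $j$ with $|d_j-d_{j-1}|=a$, and observes that this expression does not depend on $i$, so all the $\Inf(i)$ are identically distributed. You instead use a measure-preserving change of variables: the transposition $\sigma$ swapping coordinates $i$ and $j$, together with the commutation identity $\sigma(\y)^{\oplus j}=\sigma(\y^{\oplus i})$ (your coordinate-by-coordinate check of it is correct), yields $\E[\Inf(j)]=\E[\Inf(i)]$. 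Your argument is arguably cleaner and strictly more general---it needs only invariance under transpositions and would extend, for instance, to functions invariant under any transitive permutation group, where the paper's Hamming-weight counting is unavailable---and in fact your identity gives the pointwise relation $\Inf(j)\circ\sigma=\Inf(i)$, so you obtain the paper's stronger distributional conclusion for free even though you state only the expectation version, which is all that $\TIm_f$ requires since it depends only on the values $\E[\Inf(i)]$. What the paper's approach buys in exchange is the explicit distribution of the influence in terms of the level values $d_j$. For the first assertion, the paper leaves the deduction from Theorem~\ref{thm:optim} implicit, while you correctly spell it out via the AM-GM equalization condition: equal $\E[\Inf(i)]$ together with the budget constraint $\prod_{i\le n}p_i=2^{-\Energy}$ forces $p_i=2^{-\Energy/n}$, which is precisely the oblivious vector, hence $\alpha=1$. (One negligible edge case neither you nor the paper addresses: if the common value $\E[\Inf(i)]=0$, the AM-GM equality characterization degenerates, but then $\TIm_f$ vanishes for every assignment and the oblivious one is trivially optimal.)
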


\begin{proof}
The key observation that we need to make first is that symmetric functions $f$ can be evaluated
just by counting the number of 1's in the input. Let $d_j \triangleq f(x^j)$, where $x^j\in \{0,1\}^n
$ has exactly $j$ 1's. Let us consider $\Inf(1)$. What is the probability that $\Inf(1)$ takes the 
value, say, $a$? To answer this, let $J_a \subset [n]$ denote the set of all $j$ such that
$|d_j- d_{j-1}| = a$.  Then, clearly, $\Inf(1) = a$ with probability 
$\sum_{j \in J_a} \frac{\binom{n-1}{j-1}}{2^n}$. This same argument can be repeated for 
obtaining  $\Pr[\Inf(i)=a]$. Thus, the random variables $\Inf(i)$ and $\Inf(i')$, $i \ne i'$, 
have the same distributions, thereby implying that $f$ is influence-symmetric. 
\end{proof}

Let us consider the parity function $\xor(\x)$ -- a quintessential symmetric function -- 
that outputs 1 when the number of 1's in  $\x$ is odd, and $0$ otherwise. In this case,
the $\Inf(i) = 1$ for all $i$.  Thus, the $\TIm(f, \Energy) = n2^{\Energy/n}$, and this 
matches the total influence of the assignment that is influence oblivious wherein each
bit is assigned energy $\Energy/n$. Thus, $\alpha = 1$ for $\xor$.

\section{The influence ratio and PAC learning}\label{sec:learning}
Machine learning has been one of the most popular topics in computer science for decades. In this section, we would like to establish a direct relation between the influence ratio $\beta$ and a widely studied form of theoretical machine learning called Probably Approximate Correct (PAC) Learning where Boolean functions are learned with some margin for error. By exploring the relation 
between the concepts of fixed $\beta$ and PAC learning, we show that a function is 
more PAC-learnable if its influence ratio is greater than 1. \emph{Thus, this establishes the connection between the cases where machine learning performs well and the cases which can benefit from an influence aware approach}.

In this section, we use an alternative more general form of Boolean functions $f: \{-1, +1\}^n \to \mathbb{R}$. This form is widely used in the study of PAC learning and analysis of Boolean functions. For a subset $S\subseteq[n]$ let $x^S=\prod_{i\in S}x_i$ where every $x_i\in\{-1,1\}$. 

For a Boolean function $f$, another way to describe $f$ is as a mutli-polynomial called the Fourier expansion of $f$,
$$
f(\x)=\sum_{S\subseteq[n]}\hat{f}(S)x^S
$$
where the real number $\hat{f}(S)$ is called the coefficient of $f$ on $S$. Then we have from~\cite{AnalysisBook2014} that the influence of a bit $i$ is as follows.
\begin{definition}
 Define $Var(i)= \frac{1}{4}E[\Inf(i)^2]$,
\end{definition}
\begin{claim}\label{clm:prem}
For every $i\leq n$ $Var(i) =\sum_{S\subseteq[n], i\in S}\hat{f}(S)^2$.
\end{claim}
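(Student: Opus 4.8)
The plan is to unwind the definition of $Var(i)$ and reduce the claim to a Parseval-type identity in the Fourier basis. By definition $Var(i) = \frac14 \E[\Inf(i)^2]$, and recalling from Section~\ref{boolean} that $\Inf(i) = |f(\x) - f(\Impact{x}{i})|$ with $\x$ drawn uniformly (here flipping bit $i$ means replacing $x_i$ by $-x_i$ in the $\{-1,+1\}$ encoding), the whole task reduces to computing $\E[(f(\x) - f(\Impact{x}{i}))^2]$. First I would rewrite this squared difference using the Fourier expansion of $f$, and then exploit orthonormality of the monomials to collapse it into a sum of squared coefficients.

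The key algebraic step is to track how flipping bit $i$ acts on each monomial $x^S = \prod_{j \in S} x_j$. If $i \notin S$ then $(\Impact{x}{i})^S = x^S$, whereas if $i \in S$ then exactly one factor changes sign and $(\Impact{x}{i})^S = -x^S$. Substituting $f = \sum_S \hat f(S) x^S$, every term with $i \notin S$ cancels in the difference while every term with $i \in S$ doubles, giving
\[
f(\x) - f(\Impact{x}{i}) = 2 \sum_{S \ni i} \hat f(S)\, x^S .
\]

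It then remains to square this expression and take the expectation over uniform $\x$. Here I would invoke the standard fact that the characters $\{x^S\}_{S \subseteq [n]}$ are orthonormal under the uniform measure: since $x_j^2 = 1$ we have $x^S x^T = x^{S \triangle T}$, and $\E[x^{S \triangle T}] = \prod_{j \in S \triangle T}\E[x_j]$ equals $1$ when $S = T$ and $0$ otherwise. This makes all the cross terms vanish, leaving $\E[(f(\x) - f(\Impact{x}{i}))^2] = 4 \sum_{S \ni i} \hat f(S)^2$; dividing by $4$ yields $Var(i) = \sum_{S \ni i} \hat f(S)^2$, as claimed. I expect the orthonormality of the characters to be the only genuinely nontrivial ingredient — once the bit-flip bookkeeping of the previous step pins down which coefficients survive, it is exactly this orthogonality that converts a squared sum of monomials into a plain sum of squared Fourier weights.
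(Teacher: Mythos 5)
Your proof is correct: squaring removes the absolute value in the definition of $\Inf(i)$, flipping bit $i$ negates exactly the monomials $x^S$ with $i \in S$ so that $f(\x) - f(\Impact{x}{i}) = 2\sum_{S \ni i} \hat{f}(S)\,x^S$, and orthonormality of the characters then gives $\frac{1}{4}\E[\Inf(i)^2] = \sum_{S \ni i} \hat{f}(S)^2$. The paper itself gives no proof---it defers to~\cite{AnalysisBook2014}---and your argument is precisely the standard derivative-operator/Parseval computation from that reference, so it matches the intended proof.
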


 The proof, as well as more on analysis of Boolean functions can be found in~\cite{AnalysisBook2014}.

\begin{definition}
	Let $\epsilon>0, 0\leq k\leq n$. A function $f:\{0,1\}^n\rightarrow \Re$ is called $\epsilon$-concentrated up to degree $k$, if $\sum_{S\subseteq [n], |S|>k} \hat{f(S)}^2 <\epsilon$.
\end{definition}


The notion of $\epsilon$-concentration up to degree $k$ is particularly interesting as it allows us to efficiently learn the function.

\begin{theorem} [The \textbf{"Low-Degree Algorithm"} from page 81 in \cite{AnalysisBook2014}] Let $k \ge 1$ and let $C$ be a concept class for which every function $f : \{-1,1\}^n \rightarrow \{-1,1\}$ in $C$ is $\epsilon/2$-concentrated up to degree $k$. Then $C$ can be learned from random examples only with error $\epsilon$ in time
poly$(n^k,1/\epsilon)$.
    
\end{theorem}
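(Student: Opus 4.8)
The plan is to give the standard Fourier-analytic argument (the Linial--Mansour--Nisan Low-Degree Algorithm). Since every $f \in C$ is $\epsilon/2$-concentrated up to degree $k$, the degree-$\le k$ part of its Fourier expansion already carries all but an $\epsilon/2$ fraction of the $L^2$ mass, so it suffices to recover the coefficients $\hat{f}(S)$ for $|S| \le k$ accurately from samples and then threshold. First I would fix the algorithm to analyze: draw $m$ labelled examples $(\x^{(1)}, f(\x^{(1)})), \ldots, (\x^{(m)}, f(\x^{(m)}))$ with each $\x^{(j)}$ uniform on $\{-1,1\}^n$; for each $S$ with $|S| \le k$ form the empirical estimate $\tilde{f}(S) = \frac{1}{m}\sum_{j=1}^m f(\x^{(j)})(\x^{(j)})^S$ of $\hat{f}(S) = \E[f(\x)\x^S]$; set $g = \sum_{|S| \le k} \tilde{f}(S)\x^S$; and output $h = \mathrm{sign}(g)$. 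The number of coefficients with $|S| \le k$ is $\sum_{j=0}^k \binom{n}{j} = O(n^k)$, and each empirical average costs $O(mn)$ arithmetic operations, so the whole procedure runs in $\mathrm{poly}(n^k, m)$ time; it then remains only to show that $m = \mathrm{poly}(n^k, 1/\epsilon)$ samples suffice.

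Next I would carry out the $L^2$ analysis via Parseval. Because $\hat{g}(S) = 0$ for $|S| > k$ and $\hat{g}(S) = \tilde{f}(S)$ for $|S| \le k$, Parseval gives
$$\E\big[(f(\x) - g(\x))^2\big] = \sum_{|S| \le k} \big(\hat{f}(S) - \tilde{f}(S)\big)^2 + \sum_{|S| > k} \hat{f}(S)^2.$$
The second sum is below $\epsilon/2$ by the concentration hypothesis. For the first sum, each $f(\x)\x^S \in \{-1,+1\}$, so $\tilde{f}(S)$ is an average of bounded i.i.d.\ variables with mean $\hat{f}(S)$; a Hoeffding bound shows that $m = O\!\left(\frac{n^k}{\epsilon}\log\frac{n^k}{\delta}\right)$ samples make all $O(n^k)$ estimates simultaneously accurate to within roughly $\sqrt{\epsilon/(4\,O(n^k))}$ (via a union bound over the $S$'s) with probability $1-\delta$, so that $\sum_{|S|\le k}(\hat{f}(S) - \tilde{f}(S))^2 < \epsilon/2$. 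Combining the two sums yields $\E[(f-g)^2] < \epsilon$.

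Finally I would convert this $L^2$ guarantee into a classification-error guarantee, the one place where the $\{-1,+1\}$ range of $f$ is essential. Whenever $h(\x) = \mathrm{sign}(g(\x))$ disagrees with $f(\x) \in \{-1,+1\}$, the value $g(\x)$ lies on the wrong side of $0$, so $|f(\x) - g(\x)| \ge 1$ and hence $\mathbf{1}[h(\x) \ne f(\x)] \le (f(\x) - g(\x))^2$. Taking expectations over uniform $\x$ gives $\Pr_{\x}[h(\x) \ne f(\x)] \le \E[(f-g)^2] < \epsilon$, as required.

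I expect the main obstacle to be the simultaneous-accuracy bookkeeping in the middle step: one must split the allotted $\epsilon/2$ estimation budget across $\Theta(n^k)$ coefficients and calibrate the per-coefficient Hoeffding accuracy together with the union-bound failure probability, so that both the total squared estimation error stays below $\epsilon/2$ and the resulting sample count remains $\mathrm{poly}(n^k, 1/\epsilon)$. The Parseval decomposition and the $L^2$-to-$0/1$ conversion are then routine once that calibration is fixed.
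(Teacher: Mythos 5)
Your proof is correct, but note that the paper does not prove this theorem at all: it is imported verbatim from the cited reference (O'Donnell's \emph{Analysis of Boolean Functions}, page 81) and used as a black box in the proof of Theorem \ref{thm:learning}. What you have reconstructed is precisely the standard Linial--Mansour--Nisan argument given in that source --- empirical estimation of the degree-$\le k$ coefficients, Parseval to split the $L^2$ error into estimation error plus the tail (the latter bounded by $\epsilon/2$ via the concentration hypothesis), Hoeffding with a union bound over the $O(n^k)$ coefficients to keep the estimation error below $\epsilon/2$ with $\mathrm{poly}(n^k,1/\epsilon)$ samples, and the pointwise bound $\mathbf{1}[\mathrm{sign}(g(\x)) \ne f(\x)] \le (f(\x)-g(\x))^2$ to convert the $L^2$ guarantee into classification error --- so your argument is sound and complete as a proof of the cited theorem.
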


We are ready to state our main theorem, which states that having influence ratio $\beta > 1$ implies PAC learnability. 
 
Note that quite often we do not get an exact measure $\beta$, for the influence ratio $1 \le \beta \le  \frac{ \Inf(i+1)}{\Inf(i)}$ of a function, and it is much easier to obtain lower and upper bounds $\beta_1,\beta_2$ such that $1 \le \beta_1 \le \beta \le \beta_2$. 
By using these assumptions, and an additional bound $\Inf$ for which $\Inf(n)<\Inf$ we can say the following \footnote{The bounds $\beta_1,\beta_2,\Inf$ can be learned by means such as random sampling.}. Given $\epsilon>0$ calculate constant
$k>0$ such that

\begin{equation}\label{eq:k-learn}
	\Inf\cdot\frac{\beta_1^{-k}}{\beta_1-1} <\epsilon/2
\end{equation}
For simplicity denote $\Inf^f(n)$ the parameter $\Inf(n)$ for the specific function $f$. Then we have.

\begin{theorem} \label{thm:learning}
	Let $C$ be a concept class and  $\beta_1, \beta_2 \geq 1$, such that every function  $f:\{-1,1\}^n\to\{-1,1\}$ in $C$ has $\beta_f$ where $\beta_1\leq\beta_f\leq \beta_2$, and such that $\Inf^f(n)\leq \Inf$ for some given parameter $\Inf>0$,  and let $\epsilon>0$. Then $C$ can be learned from random examples with only error $\epsilon$ in time $poly(n^k,1/\epsilon)$.
\end{theorem}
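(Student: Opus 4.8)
The plan is to reduce the statement to the hypothesis of the Low-Degree Algorithm by showing that every $f \in C$ is $\epsilon/2$-concentrated up to the degree $k$ fixed in equation~\eqref{eq:k-learn}. Once this concentration is in hand, the Low-Degree Algorithm immediately produces a learner with error $\epsilon$ running in time $\mathrm{poly}(n^k,1/\epsilon)$, which is exactly the desired conclusion. Hence the entire content of the proof is the Fourier-tail estimate $\sum_{S:\,|S|>k}\hat{f}(S)^2 < \epsilon/2$, and everything below is aimed at that single inequality.

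First I would translate the hypotheses into statements about the variances $Var(i)=\sum_{S \ni i}\hat{f}(S)^2$ supplied by Claim~\ref{clm:prem}. Since the bits are ordered so that $\E[\Inf(i)]$ is non-decreasing with consecutive ratio at least $\beta_1$, and since for a $\{-1,1\}$-valued function $Var(i)$ is proportional to $\E[\Inf(i)]$ (because $\Inf(i)$ takes only the values $0$ and $2$, so $Var(i)=\tfrac14\E[\Inf(i)^2]=\tfrac12\E[\Inf(i)]$), the variances obey the same geometric law, giving $Var(i) \le Var(n)\,\beta_1^{-(n-i)}$. Combining this with $\Inf^f(n) \le \Inf$, which controls $Var(n)$ up to the fixed constant relating $Var$ to $\E[\Inf]$, yields the clean estimate $Var(i) \lesssim \Inf\,\beta_1^{-(n-i)}$ for every $i$.

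The key combinatorial observation — and the step I expect to carry the argument — is a pigeonhole fact: only $k$ indices exceed $n-k$, so any set $S$ with $|S|>k$ cannot be contained in $\{n-k+1,\dots,n\}$ and must contain at least one index $i \le n-k$. Charging each surviving high-degree set to such an index gives
\[
\sum_{S:\,|S|>k}\hat{f}(S)^2 \;\le\; \sum_{i=1}^{n-k}\sum_{S \ni i}\hat{f}(S)^2 \;=\; \sum_{i=1}^{n-k} Var(i),
\]
the inequality being a harmless overcount. Substituting the geometric bound on $Var(i)$ and summing the resulting geometric series in $\beta_1^{-1}$ collapses the right-hand side to a quantity of the form $\Inf\,\tfrac{\beta_1^{-k}}{\beta_1-1}$, which is precisely the left-hand side of equation~\eqref{eq:k-learn} and hence below $\epsilon/2$ by the choice of $k$. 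This establishes $\epsilon/2$-concentration up to degree $k$ and completes the reduction.

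The main obstacle is getting the tail to decay like $\beta_1^{-k}$ rather than merely like $1/k$: the naive degree-averaging bound $\sum_{S:\,|S|>k}\hat f(S)^2 \le \tfrac1k \sum_i Var(i)$ only yields polynomial decay in $k$. The exponential decay is recovered exactly by the pigeonhole step, which forces every high-degree set to ``pay'' the small variance of a low-influence coordinate $i \le n-k$, after which the geometric decay of $Var(i)$ does the rest. A secondary point to handle carefully is the constant relating $Var(i)=\tfrac14\E[\Inf(i)^2]$ to $\E[\Inf(i)]$ for $\{-1,1\}$-valued functions, which should be absorbed into the learned bound $\Inf$ so that the final estimate lines up cleanly with equation~\eqref{eq:k-learn}.
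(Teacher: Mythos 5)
Your proposal is correct and follows essentially the same route as the paper's proof: it charges each Fourier coefficient of degree greater than $k$ to a coordinate $i \le n-k$ by pigeonhole, bounds $Var(i)$ geometrically in $\beta_1^{-1}$ using the influence-ratio hypothesis together with $\Inf^f(n)\le\Inf$, sums the geometric series against equation~\eqref{eq:k-learn} to obtain $\epsilon/2$-concentration up to degree $k$, and then invokes the Low-Degree Algorithm, exactly as the paper does via its two lemmas. If anything, your explicit observation that $\Inf(i)\in\{0,2\}$ for $\{-1,1\}$-valued $f$, hence $Var(i)=\tfrac{1}{2}\E[\Inf(i)]$, is cleaner than the paper's Lemma~\ref{lem:cute} (which justifies the same comparison with the claim $\Inf(i)<1$), and your leftover constant factor is harmless bookkeeping of the same kind the paper absorbs silently.
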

The proof of the theorem can be found in Appendix \ref{appendix:learning}.

\section{Modeling inexactness in the context of sorting} \label{sec: sort}
We have been studying the idea of inexactness applied to Boolean functions. However, in real world applications, not all computational tasks are Boolean functions. Hence in this section, we examine the problem of sorting, an important computing task, to illustrate the benefit of influence aware investments.

Here we employ a setting wherein the data is an array $C$ of $N$ items  stored in the ``cloud'' and a local computer called the client must compute a sorted ordering of the data.  We begin with each data item $C[j]$, $1 \le j \le N$, being  $n$ bits   drawn uniformly at random from $\{0,1\}^n$ representing integers in the range $[0, 2^n-1]$. Since $C$ is in the cloud, the client can only access the data items indirectly through a predefined functions  $\com{a}{b}{\vec{e}}$, where $a$ and $b$ are two indices of  the array $C$ and $\vec{e}$ is an energy vector. Since comparison seeks to find the most significant bit in which $C[a]$ and $C[b]$ differ, it employs bit-wise comparison. Thus, $\vec{e}$ serves the purpose of apportioning energy values across the bits.  The client's goal is to compute a permutation of $[N]\triangleq \{1, 2, \ldots, N\}$ that matches the sorted ordering of $C$.

Our outcome will be an approximation of the correctly sorted ordering where, in the spirit of inexactness, minor errors that wrongly order numbers close in magnitude are more acceptable than egregious errors that reorder numbers that differ a lot. Thus, we measure sortedness using a measure that we call the {\em weighted Kendall's $\tau$ distance}~\cite{GP18} that we now seek to define. We establish some notations first.
Let $C^*$ denote the sorted permutation of the arbitrary array $C$. Consider two indices $a$ and $b$, both in the range $[1,N]$. Let $X(a, b)$ be an indicator random variable that is 1 when $C[a]$ and $C[b]$ are ordered differently in $C$ and $C^*$, and 0 when they are ordered the same way. The classical Kendall's $\tau$ distance~\cite{K38} counts the number of inversions and is defined as $\sum_{a \ne b} X(a, b)$. We are however interested in the weighted Kendall's $\tau$ distance of $\pi$ denoted $\wkt{C}()$ and it is defined as 
\begin{equation} \label{eq:wkt1}
    \wkt{C} \triangleq \sum_{a \ne b} \left [|C[a] - C[b]| \cdot X(a, b) \right ]
\end{equation}
The intuition behind this measure is that bigger difference between two numbers having incorrect relative order should result in bigger penalty, and vice versa. A reasonable inexact comparison scheme should have a smaller error chance for numbers that are farther apart - this is correct in the case of comparing using inexactness aware energy allocation scheme as we will see in this section.

Now, let us abuse the notation and use $\wkt{C}(\e)$ to denote the expected weighted Kendall's $\tau$ distance of the permutation that we receive when we perform quicksort on input array $C$ using energy vector $\e$. Note that this value is averaged over all the runs of quicksort, with the random factors being the pivot choices of quicksort and the comparison error from inexactness:
\begin{equation} \label{eq:wkt2}
    \wkt{C}(\e) \triangleq \E_{C^\e}\left[\wkt{C^\e}\right]
\end{equation}
where $C^\e$ denotes a permutation of array $C$ after quicksort using energy vector $\e$.

In the next part of this section we are interested in analyzing  the ratio of the expected weighted Kendall's $\tau$ distance using inexactness oblivious energy to its inexactness aware energy counterpart (the expectation is taken over all possible input arrays $C$)
\begin{equation}
\alpha^* \triangleq  \E_{C}[\wkt{C}(\eo)]/\E_{C}[\wkt{C}(\ea)]\,    
\end{equation}
which is analogous to the ratio $\alpha$ defined for Boolean functions. Our goal is to show that this ratio grows exponentially (in $n$). For the energy aware case the energy vector $\ea = (1, 2, \ldots, n)$, thereby assigning higher energy values to higher order bits. 
On the other hand, for the energy oblivious case, we use equal energy for all the bits, so the energy vector is $\eo = (\frac{n+1}{2}, \frac{n+1}{2}, \cdots, \frac{n+1}{2})$. Both the inexactness aware and the inexactness oblivious algorithms employ quicksort using $\com{\cdot}{\cdot}{\cdot}$ functions, but with their respective energy vectors. 

In the sequel theorems and proofs, we will use $I(a,b,\e)$ to denote the event that the comparison between two numbers $a$ and $b$ is incorrect using the energy vector $\e$. We use $Q(a,b,\e)$ to denote the event that the quicksort algorithm with input $C$ using energy vector $\e$ results in two numbers $a$ and $b$ having the incorrect relative positions (for simplicity we omit the input array $C$ from this notation). For simplicity, we will assume that the elements in our input array are distinct. Finally, from its definition in equations \ref{eq:wkt1} and \ref{eq:wkt2} and from linearity of expectation, $\wkt{C}(\e)$ can be calculated as follows
\begin{equation} \label{eq:wkt3}
    \wkt{C}(\e) =  \sum_{1 \le a < b \le n} |C[a] - C[b]| \cdot \Pr[Q(C[a], C[b], \e)]
\end{equation}

We state our desired result of the lower bound of $\alpha*$ as follows. The proof of this theorem can be found in Appendix \ref{appendix:sorting1}.
\begin{theorem} \label{thm: sorting1}
The ratio $\alpha^*$ is $\Omega(\frac{2^{n/2}}{N \log N})$.
\end{theorem}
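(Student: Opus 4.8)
The plan is to bound the numerator $\E_C[\wkt{C}(\eo)]$ from below and the denominator $\E_C[\wkt{C}(\ea)]$ from above, working throughout from the pairwise expansion in equation~\eqref{eq:wkt3}. The analysis splits into two largely independent pieces: (i) estimating the single-comparison error probability $\Pr[I(u,v,\e)]$ for a fixed pair of numbers, and (ii) converting this into a bound on the quicksort misordering probability $\Pr[Q(a,b,\e)]$. For (i), the key observation is that if the most significant bit at which $u$ and $v$ differ is bit $i^*$, then $|u-v| \asymp 2^{i^*}$, and a comparison is reversed essentially when some bit at position $j \ge i^*$ is misread, since a single flip at such a bit changes a number by at least $2^{i^*-1} \gtrsim |u-v|$. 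Hence $\Pr[I(u,v,\e)] = \Theta(\sum_{j \ge i^*} p_j)$ up to lower-order contributions from flips of bits below $i^*$, which matter only for atypically close pairs and can be absorbed separately. Substituting the two energy vectors, the aware allocation gives $p_j = 2^{-j}$, so $\Pr[I(u,v,\ea)] = \Theta(2^{-i^*})$, whereas the oblivious allocation gives $p_j = 2^{-(n+1)/2}$ for all $j$, so $\Pr[I(u,v,\eo)] = \Omega(2^{-(n+1)/2})$.

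Next I would average over the random input. For two uniformly random $n$-bit numbers the top differing bit satisfies $\Pr[i^* = i] = 2^{-(n-i+1)}$, and conditioned on $i^* = i$ the rank gap $|r(a)-r(b)|$ concentrates around $N 2^{i-n}$. For the oblivious \emph{lower} bound I would invoke the classical quicksort fact that $a$ and $b$ are directly compared with probability $2/(|r(a)-r(b)|+1)$, that a direct comparison permanently fixes their relative order, and that the comparison error is independent of the pivot choices; this yields $\Pr[Q(a,b,\eo)] \ge \tfrac{2}{|r(a)-r(b)|+1}\,\Pr[I(a,b,\eo)]$. Plugging in $|u-v| \asymp 2^{i^*}$, $|r(a)-r(b)| \asymp N 2^{i^*-n}$, and $\Pr[I] = \Omega(2^{-(n+1)/2})$, each pair contributes $\Omega(2^{(n+1)/2}/N)$ to the weighted sum \emph{regardless} of $i^*$ (the $i^*$-dependence cancels), and summing over the $\binom{N}{2}$ pairs gives $\E_C[\wkt{C}(\eo)] = \Omega(N\,2^{n/2})$.

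For the aware \emph{upper} bound I would instead bound $\Pr[Q(a,b,\ea)]$ by a union bound over the (at most $O(\log N)$ in expectation) pivots that can causally affect the relative order of $a$ and $b$, charging each the probability that its comparison with $a$ or $b$ is misread in an order-flipping way. Combined with the per-pair estimate $\E_C[\,|C[a]-C[b]|\cdot\Pr[I(a,b,\ea)]\,] = O(1)$ coming from $2^{i^*}\cdot 2^{-i^*} = O(1)$, this gives $\E_C[\,|C[a]-C[b]|\cdot\Pr[Q(a,b,\ea)]\,] = O(\log N)$ per pair and hence $\E_C[\wkt{C}(\ea)] = O(N^2 \log N)$. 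Dividing the two bounds gives $\alpha^* = \Omega\!\big(N 2^{n/2}/(N^2 \log N)\big) = \Omega\!\big(2^{n/2}/(N\log N)\big)$, as claimed.

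I expect the main obstacle to be the aware upper bound, specifically controlling \emph{error propagation} through quicksort. The difficulty is that a single misread comparison of $a$ (or $b$) against a pivot $r$ can send that element into the wrong partition and flip its order relative to the other even when $r$ lies outside the value-interval $[a,b]$, so the naive ``they are misordered only if their direct comparison errs'' reasoning is false. Making the union bound rigorous requires a careful recursive accounting of which pivots (inside and outside $[a,b]$) can reverse the pair, and of showing that their comparison-error probabilities, weighted by $|C[a]-C[b]|$, sum to $O(\log N)$; the recursion depth of quicksort is precisely what injects the extra $\log N$ factor in the statement. A secondary technical point is the treatment of atypically close pairs, where $|u-v| \ll 2^{i^*}$ and the simple estimate $\Pr[I] = \Theta(\sum_{j\ge i^*}p_j)$ must be refined, though their small weight keeps their total contribution under control.
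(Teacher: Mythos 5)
Your overall architecture is the same as the paper's: upper-bound $\E_C[\wkt{C}(\ea)]$ by $O(N^2\log N)$ via a per-pivot union bound over the $O(\log N)$ expected recursion depth, lower-bound $\E_C[\wkt{C}(\eo)]$ by $\Omega(N\,2^{n/2})$ from the per-comparison error $\Omega(2^{-(n+1)/2})$, and divide; your estimate $|u-v|\cdot\Pr[I(u,v,\ea)]=O(1)$ is precisely the paper's lemma $\Pr[I(a,b,\ea)]<8/(b-a)$. There is, however, one genuine gap: in the oblivious lower bound you invoke the classical fact that quicksort compares $a$ and $b$ directly with probability $2/(|r(a)-r(b)|+1)$. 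That fact is proved for \emph{exact} quicksort (the pair is compared iff the first pivot chosen among the elements ranked between them is one of the two); under inexact comparisons elements migrate across partitions, so elements outside the rank interval can erroneously separate the pair while elements inside may fail to, and the formula is neither exact nor an evident lower bound. The paper sidesteps this entirely: the \emph{first} pivot is uniform over all $N$ elements independently of any comparison noise, so with probability $2/N$ it equals $a$ or $b$, and in that event the single direct comparison irrevocably fixes the pair's relative order, giving $\Pr[Q(a,b,\eo)]>(2/N)\Pr[I(a,b,\eo)]$. Restricting to pairs differing in the most significant bit (probability $1/2$, conditional expected gap $2^{n-1}/3$) then already yields $\E_C[\wkt{C}(\eo)]=\Omega(N\,2^{n/2})$; your finer cancellation across all values of $i^*$ (which additionally needs rank-gap concentration that fails for close pairs, where the expected gap is $O(1)$) buys nothing beyond this. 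Replacing your comparison-probability step by the first-pivot device makes your numerator bound sound.

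On the denominator, the obstacle you correctly flag --- a misread comparison against a pivot lying outside the value interval can flip the pair --- is exactly what the paper's case analysis resolves, and your sketched union bound does go through once you add the key monotonicity: under $\ea$ the comparison-error probability decays with the distance between the compared values. For a pivot $C[j]$ with, say, $C[a]<C[b]<C[j]$, a wrong split requires $I(C[a],C[j],\ea)$, and $\Pr[I(C[a],C[j],\ea)]<8/(C[j]-C[a])<8/(C[b]-C[a])$; in the case $C[a]<C[j]<C[b]$ both comparisons must err, which again gives $O(1/|C[a]-C[b]|)$. Since a pivot that splits the pair freezes their order, each pivot contributes weighted error $O(1)$, and summing over the at most $1.4\log N$ expected pivots gives $|C[a]-C[b]|\cdot\Pr[Q(C[a],C[b],\ea)]=O(\log N)$ for \emph{every} input, hence $\wkt{C}(\ea)=O(N^2\log N)$ pointwise --- which is the form the final division requires.
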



The advantage of influence-aware approach can be shown not only through the ratio $\alpha*$ which is based on the difference between the two approaches' average weighted Kendall's $\tau$ distance over all inputs, but also through the distance difference of the majority of individual inputs. Let us define a good input as one for which the inexactness aware assignments results in an exponentially lower value of weighted Kendall's $\tau$ distance; the rest of the inputs are called bad. More specifically, a good input $C$ is such that the ratio between $\wkt{C}(\eo)$ and $\wkt{C}(\ea)$, the expected weighted Kendall's $\tau$ of quicksort with input $C$ under energy oblivious and energy awareness, is $\Omega(\frac{2^{n/6}}{N\log N})$. Let $g$ and $b$ denote the number of good and bad inputs, respectively. We will show that $g/b$ is exponential in $n$.

\begin{theorem} \label{thm: sorting_input}
The ratio of the number of good vs. bad inputs is  $\Omega(\frac{2^{n/3}}{N^2})$. Therefore, as $n \to \infty$, $g/b \to \infty$.
\end{theorem}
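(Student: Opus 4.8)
The plan is to convert the \emph{average} ratio $\alpha^*$ of Theorem~\ref{thm: sorting1} into a statement about almost every \emph{individual} input by means of two per-input estimates together with a tail bound on how clustered a random input can be. Throughout write $W_O(C)=\wkt{C}(\eo)$ and $W_A(C)=\wkt{C}(\ea)$; by \eqref{eq:wkt2} the quicksort randomness is already integrated out, so each is a deterministic function of $C$. Let $\delta_{\max}(C)=\max_{a,b}|C[a]-C[b]|$ denote the spread of $C$. Since a good input is by definition one with $W_O(C)/W_A(C)=\Omega\!\left(2^{n/6}/(N\log N)\right)$, it suffices to give a per-input lower bound on this ratio that holds unless $C$ is atypically clustered, and then to count the clustered inputs.

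First I would establish two inequalities valid for \emph{every} input $C$. (i) An upper bound $W_A(C)=O(N^2\log N)$: under the aware vector $\ea=(1,\dots,n)$ a pair differing first in bit $j$ has difference at most $2^{j}$ while being miscompared with probability $O(2^{-j})$, so each of the $\binom{N}{2}$ pairs contributes $O(\log N)$ to \eqref{eq:wkt3}. (ii) A lower bound $W_O(C)=\Omega\!\left(2^{-n/2}\,\delta_{\max}(C)/N\right)$: the two extreme elements of $C$ are directly compared by quicksort with probability $2/N$, this comparison is erroneous with probability $\Omega(2^{-(n+1)/2})$ under $\eo$, and any such error misorders them at a penalty of $\delta_{\max}(C)$, so their contribution alone to \eqref{eq:wkt3} already gives the bound. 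Combining (i) and (ii),
\[
\frac{W_O(C)}{W_A(C)} \;=\; \Omega\!\left(\frac{2^{-n/2}\,\delta_{\max}(C)}{N^{3}\log N}\right),
\]
so whenever $\delta_{\max}(C)\ge C_0\,N^{2}2^{2n/3}$ this ratio is $\Omega\!\left(2^{n/6}/(N\log N)\right)$, i.e. $C$ is good.

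It then remains to count the clustered inputs, those with $\delta_{\max}(C)<C_0\,N^{2}2^{2n/3}$. Since the $N$ entries are drawn independently and uniformly from $\{0,\dots,2^{n}-1\}$, the probability that their range is at most $\Delta$ is at most $N(\Delta/2^{n})^{N-1}$ (fix the minimizing coordinate and force the other $N-1$ entries into a window of length $\Delta$). With $\Delta=C_0 N^{2}2^{2n/3}$ this equals $N\bigl(C_0 N^{2}2^{-n/3}\bigr)^{N-1}=O\!\left(N^{2}2^{-n/3}\right)$: the binding case is $N=2$, where it is $O(2^{-n/3})$, and for every $N\ge 3$ the extra powers of $2^{-n/3}$ make it far smaller. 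Hence the number of bad inputs satisfies $b=O\!\left(N^{2}2^{-n/3}\right)\cdot 2^{nN}$ while $g=2^{nN}-b$, which yields $g/b=\Omega\!\left(2^{n/3}/N^{2}\right)$ and in particular $g/b\to\infty$ as $n\to\infty$.

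The main obstacle is inequality~(i): bounding $W_A(C)$ for every input requires controlling error \emph{propagation} in quicksort, where a far-apart pair $a,b$ may be routed through an intermediate pivot $c$ that is close to $a$ (hence compared across a low-energy, high-error bit) before $a$ and $b$ are ever directly compared. Showing that the aware allocation nevertheless keeps the total expected misordering penalty per pair at $O(\log N)$ despite such indirect comparisons is the technical heart; this is precisely the estimate that the proof of Theorem~\ref{thm: sorting1} must already supply, so the work here is to extract it as a bound holding for each fixed $C$ rather than only on average. A secondary and purely cosmetic point is matching the stated polynomial factor $N^{2}$: the extreme-pair lower bound in~(ii) deliberately discards all but one pair, and tightening it through the full rank-weighted sum $\sum_{a<b}\delta_{ab}/(k_{ab}+1)$ only sharpens the dependence on $N$, leaving the decisive exponential factor $2^{n/3}$ untouched.
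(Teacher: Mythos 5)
Your proposal is correct, and its probabilistic core takes a genuinely different route from the paper's. Both arguments share the same skeleton: the per-input upper bound $\wkt{C}(\ea)=O(N^2\log N)$ (the paper's Lemma~\ref{lemma: sorting 1.3}, which is indeed stated and proved for every fixed $C$, so the ``main obstacle'' you flag is already discharged by the paper and your reliance on it is legitimate), and the first-pivot inequality $\Pr[Q(\cdot,\cdot,\eo)]\ge \frac{2}{N}\Pr[I(\cdot,\cdot,\eo)]$. Where you diverge is in lower-bounding $\wkt{C}(\eo)$ and in identifying the bad event: the paper uses \emph{all} $\Theta(N^2)$ pairs that share a leading bit, each contributing $\Omega(2^{-n/2})\cdot|C[a]-C[b]|$, and bounds the probability that \emph{some} such pair is closer than $2^{2n/3}$ by a union bound over pairs against the triangle distribution of differences, giving bad probability $O(N^2/2^{n/3})$; you instead use only the single extreme pair, giving the per-input bound $\wkt{C}(\eo)=\Omega\left(2^{-n/2}\delta_{\max}(C)/N\right)$, and bound the tail of the range via $\Pr[\delta_{\max}\le\Delta]\le N(\Delta/2^n)^{N-1}$. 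Your per-input lower bound is weaker by a factor of order $N^2$ (hence your goodness threshold $\delta_{\max}\ge C_0 N^2 2^{2n/3}$), but your tail bound is markedly stronger for $N\ge 3$, decaying like $(N^2 2^{-n/3})^{N-1}$ rather than $N^2 2^{-n/3}$; so for fixed $N\ge 3$ you actually certify fewer bad inputs than the paper does, while matching it at the binding case $N=2$. Two caveats, neither fatal: first, your goodness criterion is unsatisfiable once $C_0 N^2 2^{2n/3}>2^n$, i.e.\ $N^2\gtrsim 2^{n/3}$, but in that regime the claimed bound $\Omega(2^{n/3}/N^2)$ is vacuous and the paper's union bound degenerates identically, so both proofs implicitly live in the regime $N^2\ll 2^{n/3}$, consistent with the theorem's $n\to\infty$ reading; second, your claim that the extreme pair's comparison errs with probability $\Omega(2^{-(n+1)/2})$ under $\eo$ needs to cover both $\fd{a}{b}=n$ and $\fd{a}{b}<n$ (the paper states its estimate only for $\fd{c}{d}<n$), but it does hold in both cases since a single flip of one read of the leading bit already misdirects the comparison with constant probability.
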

The proof of this theorem can be found in Appendix \ref{appendix:sorting2}

\section{Variable Precision Computation}
\label{sec: truncation}
In practice, manufacturers usually lack the resource to assign a different level of energy to every bit in a chip. A more practical approach is that only $\gamma$ different levels of energy are assigned to the bits, usually with the lowest energy level being 0. This approach is usually referred to as \textit{variable precision computation}, and has been studied in some works such as ~\cite{Hittinger2019VariablePC}, owing to its simplicity and effectiveness.

In this section, we will focus in the scenario where $\gamma = 2$, i.e. a large proportion of the energy is equally focused on the most significant $\frac{n}{k}$ bits, where a small proportion, if not none, of the energy is assigned to the remaining $n(1 - 1/k)$ bits. We denote this energy vector $\et$. Since the total energy is $\approx n^2/2$, $\et = \{ 0, 0, \ldots 0, 2nk, \ldots 2nk \}$. 

The goal of this section is to study the effect of using this energy vector compared to the inexactness oblivious approach for basic functionalities, which we again use sorting and the weighted Kendall's $\tau$ metric as an example. We are interested in bounding the value of the ratio between ${\E_C[\wkt{C}(\eo)]}$ $/{\E_C[\wkt{C}(\et)]}$ which is analogous to $\alpha^*$ in Section \ref{sec: sort}, and the ratio between good and bad inputs, whereas bad (and good) inputs are generally the ones that make $\wkt{C}(\eo)$ $/{\wkt{C}(\et)}$ exponential in $n$ following the convention in Section \ref{sec: sort}. 

Toward that goal, we prove the following two theorems. The combined result of the two theorems gives us an estimate of a 'good' truncation ratio $k$, which is inside the interval $(\frac{5}{3}, 4)$.

\begin{theorem} \label{thm: truncate1}
Let $k$ be a parameter and assume we use an energy allocation scheme $\et$ where energy is divided equally on $\frac{n}{k}$ most significant bits. Then, for an arbitrary input array $C$ drawn from the uniform random distribution, $$\Pr\left[\frac{\wkt{C}(\eo)}{\wkt{C}(\et)} = O(\frac{2^{n(k - 5/3)/6}}{N\log N})\right] = O(\frac{N^2}{2^{\frac{n}{max(3,k)}}})$$
Consequently, for constant $k > 5/3$, if we define bad inputs to be the ones that make the ratio $\frac{\wkt{C}(\eo)}{\wkt{C}(\et)}$ $O(\frac{2^{n(k - 5/3)/6}}{N\log N})$ and good inputs to be the remaining, then the ratio between good and bad inputs is at least $\Omega(\frac{2^{n/max(3,k)}}{N^2})$.
\end{theorem}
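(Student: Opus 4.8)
The plan is to mirror the structure of the proof of Theorem~\ref{thm: sorting_input}, replacing the aware vector $\ea$ by the truncated vector $\et$, and to upgrade the resulting expectation-level comparison into a per-input tail bound. The starting point is the structural behaviour of $\com{\cdot}{\cdot}{\et}$: the top $n/k$ bits carry energy $\Theta(nk)$ and are therefore read with error $2^{-\Theta(nk)}$, i.e.\ essentially exactly, whereas the bottom $n(1-1/k)$ bits carry (near-)zero energy and are read with error $\Theta(1)$. Consequently, when two numbers differ somewhere among their top $n/k$ bits the most significant differing bit is located correctly and the comparison is (up to a $2^{-\Theta(nk)}$ error) reliable; when they agree on all top $n/k$ bits -- call such a pair \emph{colliding} -- the comparison is decided inside the noisy low-order bits and errs with probability $\Theta(1)$. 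I would first make this dichotomy precise and then feed it into the quicksort analysis of the earlier appendices to conclude, for fixed $C$,
\[
\wkt{C}(\et) \;=\; \Theta\!\Big(\textstyle\sum_{(a,b)\ \mathrm{colliding}} |C[a]-C[b]|\Big),
\]
because non-colliding pairs have $\Pr[Q(C[a],C[b],\et)] = 2^{-\Theta(nk)}\approx 0$, while for a colliding pair $|C[a]-C[b]| < 2^{\,n-n/k}$, and its rank gap is below $N2^{-n/k}<1$ (for $n/k>\log N$), so it is directly compared with probability $\Theta(1)$ and hence mis-ordered with probability $\Theta(1)$.

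Next I would pin down both sides of the ratio at the order-of-magnitude level. On one hand $\E_C[\sum_{\mathrm{colliding}}|C[a]-C[b]|] = \Theta(N^2 2^{\,n-2n/k})$, since a random pair collides with probability $2^{-n/k}$ and, conditioned on colliding, has expected difference $\Theta(2^{\,n-n/k})$; this controls $\E_C[\wkt{C}(\et)]$. On the other hand, using that under $\eo$ every bit has error $2^{-(n+1)/2}$, the pairs that first differ in a most significant bit carry penalty $\Theta(2^{n})$ and are mis-ordered with probability $\Theta(2^{-n/2})$ after the $\Theta(1/N)$ directly-compared factor, so they contribute $\Omega(N2^{n/2})$ to $\wkt{C}(\eo)$; I would promote this to a high-probability lower bound $\wkt{C}(\eo) = \Omega(N2^{n/2})$ by a concentration argument on the (tightly concentrated) count of such heavy pairs, exactly as in the proof of Theorem~\ref{thm: sorting1}.

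With these estimates in hand I would obtain the per-input statement by decomposing the bad event. Writing the target ratio threshold as $r$ and a level $M$,
\[
\Pr_C\!\Big[\tfrac{\wkt{C}(\eo)}{\wkt{C}(\et)} < r\Big] \;\le\; \Pr_C\big[\wkt{C}(\eo) < M\big] \;+\; \Pr_C\big[\wkt{C}(\et) > M/r\big],
\]
where the first term is made negligible by the lower-bound concentration above and the second is bounded by Markov's inequality applied to $\wkt{C}(\et)$ (or, for a sharper constant, by a tail bound on the random weight $\sum_{\mathrm{colliding}}|C[a]-C[b]|$). Choosing $M = \Theta(N2^{n/2})$ and balancing $M/r$ against $\E_C[\wkt{C}(\et)]$ is what is tuned to produce the two advertised quantities simultaneously: the failure probability $O(N^2/2^{\,n/\max(3,k)})$ and the ratio threshold $O(2^{\,n(k-5/3)/6}/(N\log N))$ that separates bad from good inputs. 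The count statement is then immediate: since $\Pr_C[\mathrm{bad}] = O(N^2/2^{\,n/\max(3,k)}) = o(1)$ we have $\Pr_C[\mathrm{good}] = 1-o(1)$, whence $g/b = \Pr_C[\mathrm{good}]/\Pr_C[\mathrm{bad}] = \Omega(2^{\,n/\max(3,k)}/N^2)$.

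I expect two steps to carry the real difficulty. The first is the reduction in the opening paragraph: rigorously bounding $\Pr[Q(\cdot,\cdot,\et)]$ rather than the single-comparison error $\Pr[I(\cdot,\cdot,\et)]$, since the quicksort recursion interleaves reliable (top-bit) and unreliable (colliding) comparisons, and one must certify that a non-colliding pair is almost never mis-ordered even though noisy comparisons elsewhere can misplace pivots. The second is the threshold optimisation in the last step: unlike the expectation-level ratio, the high-probability per-input bound loses in the exponent, and extracting precisely the exponents $(k-5/3)/6$ and $1/\max(3,k)$ -- in particular the appearance of the constant $5/3$ and the regime change at $k=3$ inside $\max(3,k)$ -- requires carefully choosing $M$ and tracking the polynomial-in-$N$ factors, which is the delicate part of the calculation.
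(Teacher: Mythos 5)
Your skeleton is the right one and matches the paper's: decompose the bad event into ``$\wkt{C}(\eo)$ small'' and ``$\wkt{C}(\et)$ large,'' bound each over random $C$, and union-bound, with collisions on the top $n/k$ bits as the key structural notion. But the leg that controls $\wkt{C}(\et)$ has a genuine gap. First, your characterization $\wkt{C}(\et)=\Theta\bigl(\sum_{\mathrm{colliding}}|C[a]-C[b]|\bigr)$ is vacuous exactly on the inputs the theorem calls good: with probability $1-O(N^2 2^{-n/k})$ the array contains \emph{no} colliding pair, and then the denominator of the ratio consists entirely of the non-colliding contribution you dismissed as $\approx 0$. A qualitative ``$2^{-\Theta(nk)}\approx 0$'' is not enough; the paper's Lemma~\ref{lemma: sorting2.2} proves the quantitative per-pair bound $|C[a]-C[b]|\cdot\Pr[Q(C[a],C[b],\et)] < 2.8\log N\cdot 2^{-n(k-2)/2}$ (paying $|C[a]-C[b]|\le 2^{i}$ against $\Pr[I]\le (n-i+2)2^{-nk/2}$, then running the four-case pivot analysis with the $1.4\log N$ recursion depth), which on the no-collision event gives $\wkt{C}(\et)=O(N^2\log N\, 2^{-n(k-2)/2})$; the exponent $(k-2)/2$ is precisely what produces the constant $5/3$ in the theorem. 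Second, your fallback, Markov's inequality on $\E_C[\wkt{C}(\et)]=\Theta(N^2 2^{n-2n/k})$, fails qualitatively, not just by constants: that expectation is dominated by the exponentially rare colliding inputs, so at the required failure level $N^2 2^{-n/k}$ Markov only certifies $\wkt{C}(\et)=O(2^{n-n/k})$, which makes your ratio threshold $M/T$ exponentially \emph{small}. The only usable ``tail bound on the colliding weight'' is the union bound that it is exactly zero w.h.p.\ --- i.e., the paper's conditioning --- after which you still need the non-colliding estimate above.

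The other leg is also off. The paper does not, and by its technique cannot, establish $\wkt{C}(\eo)=\Omega(N2^{n/2})$ per input: equation~\ref{eq6} gives $\wkt{C}(\eo)=\Omega(N2^{n/6})$ with failure $O(N^2 2^{-n/3})$, by demanding every same-leading-bit pair have gap at least $2^{2n/3}$ (per-pair failure $2^{-n/3}$); pushing the gap threshold toward $\Theta(2^n)$ drives the per-pair failure probability to a constant, so the trade-off caps the exponent at $n/6=2n/3-n/2$. A McDiarmid-style concentration on the heavy-pair weight might indeed yield $\Omega(N2^{n/2})$ with failure $e^{-\Omega(N)}$, but that failure probability is not of the stated form $O(N^2 2^{-n/3})$ in all regimes (e.g., $N$ constant and $n\to\infty$, where $e^{-\Omega(N)}$ is a constant), and with your choices of $M$ and $T$ the advertised exponents simply do not emerge: in the paper, $(k-5/3)$ comes from adding $n/6$ to $n(k-2)/2$, and $\max(3,k)$ comes from taking the union of the two failure probabilities $N^2 2^{-n/3}$ and $N^2 2^{-n/k}$. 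So while you correctly flagged where the difficulty lives, the two quantitative inputs you propose would not prove the stated bounds.
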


\begin{theorem} \label{thm: truncate2}
Let $k$ be a parameter and assume we divide energy on $\frac{n}{k}$ most significant bits. Then, for $k < 4$, the ratio $\frac{\E_C[\wkt{C}(\eo)]}{\E_C[\wkt{C}(\et)]}$ is exponential in $n$.
\end{theorem}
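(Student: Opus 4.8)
The plan is to bound the two sides of the ratio separately. The numerator $\E_C[\wkt{C}(\eo)]$ is literally the oblivious quantity already controlled in the analysis behind Theorem \ref{thm: sorting1}, so the only genuinely new work is an \emph{upper} bound on the denominator $\E_C[\wkt{C}(\et)]$; after that, combining the two and asking when the resulting exponent stays positive will pin down the threshold $k<4$ exactly.

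First I would reuse the lower bound on the numerator. Under $\eo$ every bit is read with error $2^{-(n+1)/2}=\Theta(2^{-n/2})$, and a pair whose most significant differing bit sits at (or near) the top has magnitude $\Theta(2^{n})$ while still being misordered by quicksort with probability $\Omega(2^{-n/2})$ (for instance conditioned on the two being compared directly). These heavy pairs alone force $\E_C[\wkt{C}(\eo)]=\Omega(2^{n/2}\cdot\mathrm{poly}(N)^{-1})$, matching the numerator used for Theorem \ref{thm: sorting1}.

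The heart of the argument is the denominator. I would split the pairs $(a,b)$ according to whether $C[a]$ and $C[b]$ differ in one of the top $n/k$ bits. For a \emph{top-differing} pair the comparison $\com{a}{b}{\et}$ reads every top bit correctly except with exponentially small probability, so I introduce the event $E$ that \emph{no} top bit is misread anywhere in the run, with $\Pr[\neg E]=O(N^2 n\,2^{-\Theta(nk)})$. Conditioned on $E$, I claim every top-differing pair is ordered correctly, proved by induction on the size of the quicksort subproblem: whenever such a pair is separated by a pivot, the pivot differs from at least one of the two in a top bit and that comparison is forced correct, while the only way to misroute one of them relative to the pair would require the pivot to top-agree with that element, which (because the pivot's value lies between theirs and hence shares their top prefix) would contradict the pair being top-differing; a merely delayed element simply recurses into a strictly smaller instance carrying the same top difference. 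Hence $\Pr[Q(a,b,\et)]\le \Pr[\neg E]$ here, and the total top-differing contribution, at most $N^2\cdot 2^{n}\cdot\Pr[\neg E]$, is negligible. For a \emph{top-agreeing} pair the bottom bits carry essentially no reliable information, but I only need the trivial bound $\Pr[Q]\le 1$; such a pair has magnitude at most $2^{\,n-n/k}$, a uniformly random pair is top-agreeing with probability $2^{-n/k}$, so in expectation there are $\Theta(N^2 2^{-n/k})$ of them, giving $\E_C[\wkt{C}(\et)]=O\!\left(N^2\,2^{\,n-2n/k}\right)$.

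Putting the two bounds together yields a ratio of $\Omega\!\left(2^{n/2}/(N^2 2^{\,n-2n/k})\cdot\mathrm{poly}(N)^{-1}\right)=\Omega\!\left(2^{\,n(2/k-1/2)}\cdot\mathrm{poly}(N)^{-1}\right)$, which is exponential in $n$ precisely when $2/k>1/2$, i.e.\ when $k<4$, the stated condition (and consistent with the complementary per-input estimate of Theorem \ref{thm: truncate1}). I expect the main obstacle to be exactly the top-differing case of the denominator: $\Pr[Q(a,b,\et)]$ is not a single comparison error but must be controlled across the whole randomized recursion, and the clean route is the global conditioning on $E$ together with the inductive ``a misrouted element only falls into a smaller subproblem'' argument. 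Verifying that this induction is airtight — in particular that no top-differing pair can be flipped without some top-bit misread, which rests on the value-betweenness/shared-prefix observation above — is the delicate point.
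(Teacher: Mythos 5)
Your proposal is correct and reaches the paper's threshold $k<4$ through the same overall decomposition as the paper's proof: you reuse the lower bound $\E_C[\wkt{C}(\eo)] = \Omega(2^{n/2})$ (up to $\mathrm{poly}(N)$ factors) from the proof of Theorem \ref{thm: sorting1}, you split the denominator's pairs according to whether $FD(C[a],C[b]) \ge n - n/k$, and you handle the top-agreeing pairs identically (trivial $\Pr[Q]\le 1$, magnitude at most $2^{n-n/k}$, probability $2^{-n/k}$ of top-agreement, giving $O(N^2\, 2^{n-2n/k})$, so the ratio is exponential exactly when $n-2n/k < n/2$, i.e., $k<4$). Where you genuinely diverge is the top-differing pairs. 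The paper's Lemma \ref{lemma: sorting2.2} runs a four-case pivot analysis per pair, using the facts that $FD(a,j)\ge FD(a,b)$ when $a<b<j$ and that a between-pivot shares the top prefix, to obtain the per-pair bound $|C[a]-C[b]|\cdot\Pr[Q(C[a],C[b],\et)] < 2.8\log N/2^{n(k-2)/2}$ via the expected $1.4\log N$ recursion depth. You instead condition globally on the event $E$ that no top bit is misread anywhere, with $\Pr[\neg E]=O(N^2 n\, 2^{-nk/2})$ by a union bound over the at most $\binom{N}{2}$ comparisons, and argue by induction that under $E$ no top-differing pair can be inverted; your shared-prefix/betweenness observation is exactly the structural fact the paper uses inside its case analysis, and the delicate outside-pivot case does go through (for $a<b<j$ the flip-causing comparison is the one with $FD(a,j)\ge FD(a,b)$, which $E$ forces correct, while a misrouted $b$ still lands on the correct side of $a$, and a jointly delayed pair recurses top-differing into a smaller subproblem). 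Your route buys a cleaner argument — it replaces the paper's somewhat informal insertion of expected recursion depth into a probability bound with a one-shot union bound — at the cost of an extra $\mathrm{poly}(N)\cdot 2^n$ factor in the top-differing contribution ($O(N^4 n\, 2^{n-nk/2})$ versus the paper's $O(N^2\log N\, 2^{-n(k-2)/2})$), which is harmless here since both bounds are negligible precisely when $k>2$; conversely, the paper's per-pair lemma is quantitatively sharper and is reused in Theorem \ref{thm: truncate1}, which your global conditioning would not directly support. One caveat you share with the paper: both top-differing estimates need $k>2$ to vanish, so the hypothesis ``$k<4$'' is implicitly paired with the lower end of the interval $\left(\frac{5}{3},4\right)$ announced in Section \ref{sec: truncation}.
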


\textbf{Remarks} The variable precision energy allocation scheme is a more practical approach to inexactness where the energy is focused only on the most significant $n/k$ bits. In this section, we have shown that for a value of $k$ in the interval $(\frac{5}{3}, 4)$, sorting using variable precision energy allocation scheme is exponentially better than using inexactness oblivious energy allocation in the weighted Kendall's $\tau$ metric. This is true for both the average case (Theorem \ref{thm: truncate2}) and for most of the possible inputs with only an exponentially small number of exceptions\footnote{Of course, input data will often depend on the particular application at hand and may not be immediately suitable for variable precision in the manner we have presented. However, we believe that the principle can be adapted to work nevertheless.} (Theorem \ref{thm: truncate1}).  Note that the specific value $\frac{5}{3} < k < 4$ resulted from the analysis with $\approx \frac{n^2}{2}$ total energy. For the analyses using different levels of total energy and different restrictions (such as the number $\gamma$ of distinct energy levels), we might arrive at different schemes of energy distribution. Nevertheless, the core principle of inexactness should remain applicable.

\section{Concluding remarks}\label{sec:discussion}


The algorithmic end of computing has a rich history of  examples such as  \emph{randomization}~\cite{rabin, solovay-strassen}  
and approximation algorithms~\cite{vazirani-book}, and combined approaches such as \emph{fully polynomial Randomized approximation schemes (FPRAS)}~\cite{KL83}
 which departed radically from traditional computing philosophy of guaranteeing correctness. 
Specifically, they embraced the possibility 
that computations can yield results that are not entirely correct while offering (potentially) significant savings in resources consumed, typically running time.
Despite this relaxed expectation on the the quality of their solutions, 
randomized and approximation algorithms were always deployed on reliable 
computing systems. In contrast, inexact computing crucially differs by advocating the use of ``unreliable'' 
computing architectures and systems directly and thus, blend in the behavior 
from the platform on which it is executing directly into the algorithm. 
Thus, one can view the inexactness in our model as a way of extending the principles of randomization and approximation down to the hardware level, thereby improving the overall gains that we can garner.
Thus, the ability to lower cost by lowering energy,  and
its allocation to different parts of the computation guided by influence are made explicit and  can be managed by the algorithm designer. 
By demonstrating the value of this idea in canonical and illustrative settings, namely theory of Boolean functions, PAC learning, inexact sorting, we aimed to have
demonstrated its value in a range of settings. In principle, the model we have introduced and whose value we demonstrated through several foundational building blocks is truly general in the following sense: \emph{given any computing engine and hence an instance of our model, an algorithm can be designed and evaluated. Additionally, due to its theoretical generality, our model parameters allow us to assert the cost and quality of algorithms as functions of parameter values and thus can, in the spirit of the foundations of computer science, be characterized as theorems are true asymptotically.}

In addition to extending the notion of randomized and approximate computation to the hardware level, we believe that the framework of inexactness that we have introduced can seamlessly extend beyond its immediate motivation from CMOS technology. At its core, the potential for inexactness stems from the notion of influence that is  orthogonal to the computing technology that is employed. In our work, we have framed the model using CMOS principles and the concomitant error function that decays exponentially with energy. Alternative technologies like quantum computing may offer slightly different modeling parameters, but we believe that the core principles based on the notion of influence will remain intact and effective.
Thus, we hope that our work will enable future  work resulting in a principled injection of inexactness in a wide range of contexts.



\bibliographystyle{plain}
\bibliography{inexact}
\newpage

\appendix
\section{Proof of theorem \ref{thm:learning} in section \ref{sec:learning}}
\label{appendix:learning}

We show that given $(\epsilon, f)$, by obtaining the fixed $\beta$ from the function $f$ we can provide $k$ such that $f$ is concentrated with respect to $(\epsilon, k)$.
The close relation between a concentrated and learnable functions will lead us to a result about the capability of the function of being learnable.

\begin{lemma}
	Let $f$ be a function $f:\{0,1\}^n\rightarrow \Re$	such that $\sum_{i>k}Var(n - i)<\epsilon$ for $\epsilon>0$ and $k\leq n$.
	Then $f$ is $\epsilon$-concentrated up to degree $k$.
\end{lemma}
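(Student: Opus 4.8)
The plan is to convert the hypothesis, which is phrased in terms of the influence-based quantity $Var$, into a statement about Fourier weight at high degrees, using Claim~\ref{clm:prem} as the bridge. Recall that Claim~\ref{clm:prem} gives $Var(i) = \sum_{S\subseteq[n],\, i\in S}\hat{f}(S)^2$, so summing $Var$ over a set of coordinates is the same as summing each $\hat{f}(S)^2$ weighted by how many of those coordinates lie in $S$. The target, $\epsilon$-concentration up to degree $k$, asks that $\sum_{|S|>k}\hat{f}(S)^2 < \epsilon$. Thus I need to dominate the \emph{unweighted} high-degree Fourier weight by a weighted sum of variances over a suitable block of coordinates.

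The key combinatorial observation I would establish first is a pigeonhole fact. Because the coordinates are arranged in nondecreasing order of influence, the $k$ most influential coordinates form exactly the block $\{n-k+1,\ldots,n\}$, a set of size $k$. Any $S$ with $|S| > k$ therefore cannot be contained in this top block, and so $S$ must meet the complementary block of least-influential coordinates; that is, $|S \cap \{1,\ldots,n-k\}| \ge 1$ whenever $|S| > k$.

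With this in hand, the main step is an interchange-of-summation bound:
\begin{equation*}
\sum_{|S|>k}\hat{f}(S)^2 \;\le\; \sum_{S}\, \bigl|S \cap \{1,\ldots,n-k\}\bigr|\cdot \hat{f}(S)^2 \;=\; \sum_{j=1}^{n-k}\ \sum_{S \ni j}\hat{f}(S)^2 \;=\; \sum_{j=1}^{n-k} Var(j),
\end{equation*}
where the first inequality uses the pigeonhole fact (so the indicator $[\,|S|>k\,]$ is dominated by the count $|S\cap\{1,\ldots,n-k\}|$) together with nonnegativity of all terms, and the middle equality merely swaps the order of summation and applies Claim~\ref{clm:prem}. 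Reindexing by $i=n-j$ turns the final sum into a sum of $Var(n-i)$ over the low-influence coordinates, which is exactly the quantity the hypothesis places below $\epsilon$. Chaining the inequalities then yields $\sum_{|S|>k}\hat{f}(S)^2 < \epsilon$, i.e.\ $f$ is $\epsilon$-concentrated up to degree $k$.

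I expect the main obstacle to be getting the charging argument and its index bookkeeping exactly right: one must verify that the pigeonhole genuinely forces each high-degree $S$ to contribute at least one unit to the variance sum (so the indicator is dominated by the intersection count), and that the reindexing $i=n-j$ aligns the block $\{1,\ldots,n-k\}$ with the hypothesis range without an off-by-one slip between the ``$>k$'' and ``$\ge k$'' conventions. Everything else---the summation swap and the invocation of Claim~\ref{clm:prem}---is routine once the correct block of coordinates is pinned down.
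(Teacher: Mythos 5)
Your proof is correct and takes essentially the same route as the paper's: both use Claim~\ref{clm:prem} to convert $Var$ into Fourier weight, observe that any $S$ with $|S|>k$ must contain a low-index coordinate, and then bound the high-degree weight by a sum of variances over those coordinates (the paper phrases this as a set inclusion followed by a union-bound-style sum, while you dominate the indicator by the intersection count $|S\cap\{1,\ldots,n-k\}|$ and swap summations --- the same double counting). The off-by-one you flagged is genuinely present --- your final sum $\sum_{j=1}^{n-k}Var(j)$ reindexes to $\sum_{i=k}^{n-1}Var(n-i)$, which contains the term $i=k$ not covered by the hypothesis range $i>k$ --- but the paper's own proof commits the identical slip (its assertion that $|S|>k$ forces some $i>k$ with $n-i\in S$ fails for $S=\{n-k,\ldots,n\}$), so your argument matches the paper's in both substance and in this minor imprecision.
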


\begin{proof}
	From Claim \ref{clm:prem}, for every $i<n$ we have
	
	\begin{equation}
		Var(n-i)=\sum_{S\subseteq [n], n-i\in S} \hat{f(S)}^2
	\end{equation}

	Now let $S$ be a subset of size bigger than $k$. Then there must be an element $i>k$ such that $n-i\in S$.
	Therefore $\{S| S\subseteq [n], |S|>k\}\subseteq \{S| S\subseteq [n], \exists(i>k) n-i\in S\}$.
	Therefore we have that
	\begin{equation}
		\sum_{S\subseteq [n], |S|>k} \hat{f(S)}^2 \leq \sum_{S\subseteq [n], \exists(i>k) n-i\in S} \hat{f(S)}^2 = \sum_{i>k}\sum_{S\subseteq [n], n-i\in S} \hat{f(S)}^2= \sum_{i>k}Var(n-i) <\epsilon
	\end{equation}
	
	Therefore $f$ is $\epsilon$-concentrated up to degree $k$. 
\end{proof}


Given that the influences $\Inf(i)$ are all positive random variables we have an easy lemma.
\begin{lemma}
\label{lem:cute}
    	Let $f$ be a function $f:\{0,1\}^n\rightarrow \Re$	such that $\sum_{i>k}\Inf(n-i)<\epsilon$ for $1 > \epsilon>0$ and $k\leq n$.
	Then $f$ is $\epsilon$-concentrated up to degree $k$.
\end{lemma}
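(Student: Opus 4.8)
The plan is to reduce the statement to the preceding lemma, which already shows that $\sum_{i>k}Var(n-i)<\epsilon$ forces $\epsilon$-concentration up to degree $k$. Thus the only genuine work is a pointwise comparison of each $Var(n-i)$ against the corresponding influence $\E[\Inf(n-i)]$, and this is exactly where the remark that the influences are \emph{bounded, nonnegative} random variables enters.

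First I would recall that in the learning setting the relevant functions are $\{-1,1\}$-valued, so the random variable $\Inf(i)=|f(\x)-f(\x^{\oplus i})|$ takes values only in $\{0,2\}$. Consequently $\Inf(i)^2 = 2\,\Inf(i)$ holds \emph{pointwise} as an identity of random variables, and taking expectations over $\x\sim\{0,1\}^n$ gives $\E[\Inf(i)^2]=2\,\E[\Inf(i)]$. Substituting into the definition $Var(i)=\tfrac14\E[\Inf(i)^2]$ yields $Var(i)=\tfrac12\E[\Inf(i)]$, and in particular $Var(i)\le \E[\Inf(i)]$ since $\E[\Inf(i)]\ge 0$. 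With this inequality in hand the rest is immediate: reading the hypothesis $\sum_{i>k}\Inf(n-i)<\epsilon$ in the paper's convention as $\sum_{i>k}\E[\Inf(n-i)]<\epsilon$, I would sum the pointwise bound to obtain $\sum_{i>k}Var(n-i)\le \sum_{i>k}\E[\Inf(n-i)]<\epsilon$, and then invoke the preceding lemma verbatim to conclude that $f$ is $\epsilon$-concentrated up to degree $k$.

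The step I expect to need the most care is the comparison $Var(i)\le\E[\Inf(i)]$ itself, rather than any long calculation. This inequality is genuinely false for an arbitrary $f:\{0,1\}^n\to\Re$ taking large values (a function concentrated on a single coordinate can make $Var(i)$ much larger than $\E[\Inf(i)]$), so the proof must lean on the boundedness of the range; the clean identity $\Inf(i)^2=2\,\Inf(i)$ is precisely the manifestation of that boundedness and is what makes this an ``easy'' lemma once noticed. A secondary, purely notational, issue is to reconcile the bare $\Inf(n-i)$ appearing in the statement with the expected influence $\E[\Inf(n-i)]$ used elsewhere, and to check the index range so that $n-i\ge 1$ throughout the summation.
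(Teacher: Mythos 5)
Your proof takes essentially the same route as the paper's: both reduce the lemma to the preceding one by establishing the pointwise comparison $Var(n-i)\le \E[\Inf(n-i)]$ and summing. The one place you differ is the justification of that comparison, and yours is actually the sounder version: the paper disposes of it in a single line by asserting $0 \le \Inf(i) < 1$ ``from the definition of influence,'' which is inconsistent with the $\{-1,1\}$-valued setting in which the lemma is invoked (there $\Inf(i)\in\{0,2\}$, so the paper's stated bound fails even though the conclusion survives), whereas your identity $\Inf(i)^2 = 2\,\Inf(i)$, giving $Var(i)=\tfrac12\E[\Inf(i)]$, is exactly the correct manifestation of boundedness for the functions to which Theorem~\ref{thm:learning} applies. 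Your remark that the inequality is genuinely false for arbitrary $f:\{0,1\}^n\to\Re$ is also well taken: the lemma as literally stated needs the bounded range, a hypothesis the paper uses implicitly but states incorrectly.
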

\begin{proof}
    $\sum_{i>k}\Inf(n-i)<\epsilon < 1$ implies $\sum_{i>k}Var(n-i)<\epsilon$ because of $ 0 \le \Inf(i) < 1$ from the definition of influence.
\end{proof}

With Lemma~\ref{lem:cute}, we are ready to prove our main theorem of this section.
\begin{proof}[Proof of Theorem \ref{thm:learning}]
For every function $f$ in $C$ we have that $\beta_1 \le \beta_f$. Then we have: 
	
	\begin{equation}\label{eq:k-learnTmp}
		\sum_{i>k}\Inf(n-i) =\Inf(n)\frac{\beta_f^{-k}-\beta_f^{-n+1}}{\beta_f - 1} < \Inf\frac{\beta_1^{-k}}{\beta_1-1} <\epsilon/2
	\end{equation}

	Therefore we have from Lemma \ref{lem:cute} that every function $f$ in $C$ is  $\epsilon/2$-concentrated up to degree $k$. By the ``Low-Degree'' Algorithm, the learnability from random examples with error $\epsilon$ in time $poly(n^k,1/\epsilon)$ follows immediately. 
    
\end{proof}

It is interesting to note that the opposite of Lemma \ref{lem:cute} is not always correct. Specifically,  it is easy to construct a function $f$ for which $\hat{f(\{i\})}=1$ for every bit $i$, and $\hat{f(S)}=0$ (or is very small) for every set $S$ of size at least $2$.
Such a function is $\epsilon$-concentrated up to degree $2$ but no matter what the order of bits are for every $i$, $\Inf(i) \geq \hat{f(\{i\})}=1$ So for every $k<n$ $\sum_{n-i>k}\Inf(n-i) \geq 1$.
Studying of when exactly a small concentration lead to small influence ratio is a matter of future work.

\section{Proof of Theorem \ref{thm: sorting1} in section \ref{sec: sort}}
\label{appendix:sorting1}
First we prove some lemmas which will be useful for the rest of this section.

\begin{lemma} \label{lemma: sorting 1.1}
Consider two arbitrary $n$-bit integers $a < b$. Then
$$\Pr[I(a,b, \ea)] < \frac{8}{b-a}$$
\end{lemma}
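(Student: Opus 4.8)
The plan is to condition on the bit structure of $a$ and $b$ and show that $\com{a}{b}{\ea}$ errs only if some bit at least as significant as their first point of disagreement is misread. First I would let $m$ denote the most significant bit position at which $a$ and $b$ differ; since $a<b$ we have (bit) $a_m=0$ and $b_m=1$, while $a_i=b_i$ for every $i>m$. Because the two numbers agree above bit $m$ and bit $m$ contributes its full weight to the gap, the difference satisfies $b-a<2^{m}$, which yields the key inequality $2^{-m}<\tfrac{1}{b-a}$ (up to the indexing convention one gets $2^{-m}\le\tfrac{2}{b-a}$, which is what produces the stated constant). Recall that under $\ea=(1,2,\ldots,n)$ the $i$-th bit is read with error probability $p_i=2^{-e_i}=2^{-i}$, and the reads of the two numbers are independent.

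The central observation is that $\com{a}{b}{\ea}$ returns the correct order whenever every bit at position $i\ge m$ is read correctly in both $a$ and $b$. Indeed, if no flip occurs at any position $\ge m$, then the read strings agree with the true strings on all those positions, so the most significant position at which the two read strings disagree is exactly $m$, where the reads are $0$ and $1$; the bit-wise comparison therefore correctly decides $a<b$. Consequently $\Pr[I(a,b,\ea)]\le 1-\prod_{i\ge m}(1-p_i)^2$, the complement of the event that all reads at positions $\ge m$ are correct.

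From here the bound is a routine estimate. Using $1-\prod_i(1-p_i)^2\le\sum_i\bigl(1-(1-p_i)^2\bigr)\le\sum_i 2p_i$ and then summing the geometric tail, $\sum_{i\ge m}2p_i=\sum_{i\ge m}2\cdot 2^{-i}\le 2^{-m+2}$. Combining this with $2^{-m}=O(1/(b-a))$ from the first paragraph gives $\Pr[I(a,b,\ea)]=O(1/(b-a))$ with a constant strictly below $8$, as claimed. A slightly finer analysis that separately tracks the three failure modes --- a spurious disagreement at some $i>m$, both reads flipping at $m$, or the decision slipping below $m$ via a spurious agreement at $m$ --- reaches the same conclusion and makes the constant explicit.

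I expect the main obstacle to lie in the case analysis of the second paragraph: justifying rigorously that reading every bit at positions $\ge m$ correctly forces the top-down comparison to output the right answer, and, if one wants the sharp constant, enumerating the ways the protocol can err without double counting. Once that bookkeeping is settled, relating $m$ to $b-a$ and evaluating the geometric sum are straightforward.
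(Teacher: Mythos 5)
Your proof is correct, and it reaches the paper's bound by a genuinely simpler route than the paper's own argument. The paper fixes the first differing bit $i$ (most significant bit indexed $n$, so $p_j = 2^{-j}$) and decomposes the error event into exactly the three failure modes you mention only in passing at the end: a wrong termination at some more significant bit $j>i$ (probability at most $\sum_{j>i}2^{-j}=2^{-i}-2^{-n}$), a double flip at bit $i$ itself (probability $2^{-2i}$), and a spurious agreement at bit $i$ followed by an essentially fair comparison of the low-order bits (probability less than $2^{-(i-1)}$); adding these gives $\Pr[I(a,b,\ea)]<2^{-(i-2)}$, and the conversion to $8/(b-a)$ via $b-a<2^{i+1}$ is the same final step you perform. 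You instead replace the whole case analysis by a single sufficient condition --- if every bit at position $\ge m$ is read correctly in both numbers, then the most significant read disagreement occurs at $m$ with the correct orientation, so the comparator answers correctly --- followed by a union bound $\Pr[I(a,b,\ea)]\le\sum_{i\ge m}2\cdot 2^{-i}<2^{-m+2}$. This is valid in the paper's comparison model (decide at the most significant position where the read strings differ), and it dissolves the double-counting worry you raise at the end, since you never partition the failure event at all; it also yields the same constant (indeed constant $4$ under the convention $b-a<2^{m}$, and $8$ under the paper's $b-a<2^{m+1}$ slack). What the paper's finer decomposition buys is the explicit per-mode accounting --- the $2^{-2i}$ term at the differing bit and the factor $1/2$ below it --- a structure it reuses essentially verbatim in the variable-precision analysis of Section~\ref{sec: truncation}; for the present lemma, your coarser bound loses nothing.
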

\begin{proof}
Consider $n \ge i \ge 1$ the first bit where $a$ and $b$ differ ($i = n$ means $a$ and $b$ differ from the most significant bit).

The probability that comparison is wrong before getting to bit $i$ is less than or equal to the sum of the probabilities that the comparison is wrong at bit $j > i$, which is $= 1/2^{n} + 1/2^{n-1} + \ldots + 1/2^{i+1} = 1/2^{i} - 1/2^{n}$.

The probability that the comparison is wrong at bit $i$ is $1/2^{2i}$ since the bits $i$ in $a$ and $b$ must both be flipped for this to happen. 
The probability that the comparison is wrong after bit $i$ is $< 1/2^{i-1}$ as exactly one bit $i$ in either $a$ and $b$ must be flipped and the remaining of $C[a]$ has a probability $1/2$ to be compared bigger than the remaining of $b$. 

Summing all these, we have $\Pr[I(a,b)] < 1/2^{i-2}$.

Note that when $i$ is the first bit that $a$ and $b$ differ, we have $2^{i+1} > b - a \ge 1$. 
Therefore, $(b-a)\Pr[I(a,b)] < \frac{1}{2^{i-2}}\cdot 2^{i+1} = 8$ and the result follows.

\end{proof}

\begin{lemma} \label{lemma: sorting 1.3}
Consider an arbitrary array $C$ of $N$ $n$-bit integers. Then
    $$\wkt{C}(\ea) =  O(N^2 \log N)$$
\end{lemma}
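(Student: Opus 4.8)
The plan is to begin from the closed form in equation~\eqref{eq:wkt3} and reindex the array by its sorted values $z_1 < z_2 < \cdots < z_N$ (the elements are distinct by assumption), so that
\begin{equation*}
\wkt{C}(\ea) = \sum_{1 \le i < j \le N} (z_j - z_i)\,\Pr[Q(z_i, z_j, \ea)].
\end{equation*}
The entire statement then reduces to the per-pair estimate $(z_j - z_i)\,\Pr[Q(z_i,z_j,\ea)] = O(\log N)$, since summing an $O(\log N)$ bound over the $\binom{N}{2}$ pairs yields $O(N^2 \log N)$, which is where the $N^2$ factor comes from. Note that $z_j - z_i$ can be as large as $2^n$, so the trivial bound $\Pr[Q] \le 1$ is useless; the content is that $\Pr[Q(z_i,z_j,\ea)]$ must decay like $1/(z_j - z_i)$, which is exactly the type of estimate supplied by Lemma~\ref{lemma: sorting 1.1}.

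To prove the per-pair estimate I would exploit the following structural fact about quicksort: the final relative order of $z_i$ and $z_j$ is locked in at the first partition step that either selects one of them as a pivot or sends them to different sublists, and once they lie in different sublists their output order can never change again. I would therefore condition on this first ``separating pivot'' $p$ and split into four cases according to the value of $p$: (i) $p \in \{z_i, z_j\}$, a direct comparison, where an inversion occurs exactly on the event $I(z_i, z_j, \ea)$; (ii) $z_i < p < z_j$, where an inversion forces \emph{both} comparisons $\com{z_i}{p}{\ea}$ and $\com{z_j}{p}{\ea}$ to be wrong; (iii) $p < z_i$, where an inversion forces $\com{z_j}{p}{\ea}$ to be wrong; and (iv) $p > z_j$, symmetric to (iii). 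In each case I would bound the relevant error probability through Lemma~\ref{lemma: sorting 1.1}, using the monotonicity $z_j - p \ge z_j - z_i$ in case (iii) (and its analogue in (iv)) so that the single wrong comparison has probability at most $8/(z_j - z_i)$, and combine it with the standard quicksort fact that a prescribed element is the first pivot chosen from a block of $m$ consecutive (by value) elements with probability $1/m$. Using that distinct integers satisfy $z_k - z_i \ge k - i$, the resulting sums over candidate pivots telescope into harmonic numbers: the outside-left case contributes at most $(z_j - z_i)\sum_{l<i} \tfrac{2}{j-l+1}\cdot\tfrac{8}{z_j - z_i} = \sum_{l<i}\tfrac{16}{j-l+1} = O(\log N)$, and the intermediate case is tamed by the partial-fraction identity $\tfrac{z_j - z_i}{(z_k - z_i)(z_j - z_k)} = \tfrac{1}{z_k - z_i} + \tfrac{1}{z_j - z_k}$, which again produces harmonic sums. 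Adding the four cases gives $(z_j - z_i)\Pr[Q(z_i,z_j,\ea)] = O(\log N)$.

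The main obstacle is that inexact comparisons perturb the quicksort recursion itself: a wrong comparison against a far-away pivot can split $z_i$ and $z_j$ prematurely, so the clean ``first pivot is uniformly distributed'' facts that drive the exact quicksort analysis do not hold verbatim under errors. The care lies in setting up the conditioning so that the first separating event is well defined and its inversion probability is controlled purely by the comparison errors at that step, and in arguing (via a union bound over candidate separating pivots) that the probability a given element acts as the separating pivot is dominated by the idealized direct-comparison probability $2/(j-l+1)$. Once this bookkeeping is in place, the decay of $\Pr[I(\cdot,\cdot,\ea)]$ from Lemma~\ref{lemma: sorting 1.1} does all the quantitative work and the remaining computations are the routine harmonic-sum estimates sketched above.
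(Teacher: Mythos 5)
Your skeleton matches the paper's proof almost exactly: the same reduction via equation~\eqref{eq:wkt3} to the per-pair claim $|C[a]-C[b]|\cdot\Pr[Q(C[a],C[b],\ea)] = O(\log N)$, the same four-case analysis of the pivot that first separates the pair (direct comparison; pivot between; pivot outside on either side), and the same use of Lemma~\ref{lemma: sorting 1.1} so that each case contributes at most $O\bigl(1/|C[a]-C[b]|\bigr)$ to the inversion probability. Where you diverge is the final accounting, and that is where your proposal has a genuine gap. You want to weight each candidate separating pivot $z_l$ by the classical exact-quicksort selection probability ($z_l$ is the first pivot drawn from the rank block between $l$ and $j$, probability about $1/(j-l+1)$) and then sum harmonic series. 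But, as you yourself note, inexact comparisons perturb sublist membership, so this distributional fact fails: an erroneous comparison can place a far-away element into the pair's sublist (or eject a nearby one), and the probability that a given element ever serves as the pair's pivot is no longer governed by rank blocks. Your proposed fix --- that this probability ``is dominated by the idealized direct-comparison probability $2/(j-l+1)$'' --- is precisely the nontrivial claim you would have to prove, and you give no argument for it; controlling it would require tracking how comparison errors propagate through the recursion tree, which is a substantial piece of work, not bookkeeping.

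The paper avoids this entirely, and the trick is worth internalizing: the four-case analysis shows the per-encounter inversion contribution is uniformly $O\bigl(1/|C[a]-C[b]|\bigr)$ \emph{regardless of which element is the pivot} (for the intermediate case, because the product $\frac{8}{C[j]-C[a]}\cdot\frac{8}{C[b]-C[j]}$ is itself $O\bigl(1/|C[a]-C[b]|\bigr)$, the partial-fraction observation you also make). Hence the identity and distribution of the encountered pivots are irrelevant; all that matters is \emph{how many} pivots the pair is compared against before separating, which is at most the recursion depth, of expected size $< 1.4\log N$ (the paper assumes a quicksort variant whose depth bound is not input-dependent). A union bound over these at most $O(\log N)$ encounters then gives $|C[a]-C[b]|\cdot\Pr[Q(C[a],C[b],\ea)] = O(\log N)$ directly, and summing over the $\binom{N}{2}$ pairs finishes the lemma. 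Replacing your harmonic-sum step with this depth-based union bound closes the gap; as written, your argument does not go through.
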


\begin{proof}
We will prove the inequality using result from Lemma \ref{lemma: sorting 1.1}. Consider a random pivot $C[j]$. Note that once the pivot $C[j]$ splits $C[a]$ and $C[b]$ to two sides, then the relative position of $C[a]$ and $C[b]$ is determined. In particular if $C[j]$ splits $C[a]$ to the bigger side and $C[b]$ to the smaller side then the event that split is incorrect at pivot $C[j]$, which we denote $Q(C[a], C[b], C[j],\ea)$, happens. There are 4 cases:
\begin{itemize}
    \item Case 1: $C[a] < C[b]< C[j]$. In this case $Q(C[a],C[b], C[j], \ea)$ happens if only $I(C[a], C[j])$ happens. Therefore in this case 
    \begin{equation*}
        \begin{split}
            \Pr[Q(C[a],C[b], C[j], \ea)] &= \Pr[I(C[a],C[j], \ea)] \cdot(1-\Pr[I(C[b],C[j], \ea)]) \\
            &< \Pr[I(C[a],C[j], \ea)] < \frac{8}{C[j] - C[a]} < \frac{8}{|C[a] - C[b]|} 
        \end{split}
    \end{equation*} 
    \item Case 2: $C[j] < C[a] < C[b]$. In this case $Q(C[a],C[b],C[j], \ea)$ happens if only $I(C[b],C[j], \ea)$ happens. Therefore again 
    \begin{equation*}
    \begin{split}
        \Pr[Q(C[a],C[b],C[j], \ea)] & = \Pr[I(C[b],C[j], \ea)].(1-\Pr[I(C[a],C[j], \ea)])\\
        &< \Pr[I(C[b],C[j], \ea)] <  \frac{8}{C[j] - C[b]} < \frac{8}{|C[a] - C[b]|}  
    \end{split}
    \end{equation*}
    \item Case 3: $C[a] < C[j] < C[b]$. In this case $Q(C[a],C[b],C[j], \ea)$ happens if both $I(C[b],C[j], \ea)$ and $I(C[a],C[j], \ea)$ happen. Therefore, noting that $\Pr[I(x,y, \ea)] < 1/2$ for all $x \ne y$
    \begin{equation*}
    \begin{split}
    \Pr[Q(C[a],C[b],C[j], \ea)] &= \Pr[I(C[b],C[j], \ea)] \cdot \Pr[I(C[a],C[j], \ea)]\\
    &< \frac{8}{C[j] - C[a]} \cdot\frac{8}{C[b] - C[j]} < \frac{64}{C[a] - C[b]}
    \end{split}
    \end{equation*}
    \item Case 4: either $a = j$ or $b = j$. In this case, $\Pr[Q(C[a],C[b],C[j], \ea)] = \Pr[I(C[a],C[b], \ea)]$.
\end{itemize}
Therefore, for all choices of pivot $C[j]$, $\Pr[Q(C[a], C[b], C[j], \ea] < \frac{64}{|C[a] - C[b]|}$. The probability of $Q(C[a],C[b], \ea)$ happens is less than or equal to the sum of the probabilities that $Q(C[a], C[b], C[j], \ea)$ happens for $C[j]$ all the pivots that $C[a]$ and $C[b]$ get compared to
\begin{align*}
\Pr[Q(C[a],C[b], \ea)] \le \sum_{j}\Pr[Q(C[a],C[b],C[j], \ea)] \\
\le 1.4\log N\Pr[I(C[a],C[b], \ea)] < 1.4 \log N \frac{64}{|C[a] - C[b]|}
\end{align*}
using the well-known result that the expected recursion depth of quicksort is $< 1.4\log N$ (here we assume that the worst case of the version of quicksort that we use is not input-dependent). Therefore, 
\begin{equation}
    \Pr[Q(C[a],C[b], \ea)]\cdot|C[a] - C[b]| < 90 \log N
\end{equation}
for arbitrary $a,b$.

We can now evaluate $\wkt{C}(\ea)$ as follows.
\begin{equation} \label{eq3}
\begin{split}
\wkt{C}(\ea) & = \sum_{1 \le a < b \le N}\left [ |C[a] - C[b]| \cdot  \Pr[Q(C[a],C[b], \ea)]\right ]\\
& \le \sum_{1 \le a < b \le N} 90 \log N\\
& < 50 N^2 \log N.
\end{split}
\end{equation}
\end{proof}
Now we can use the lemmas developed to prove our main result of this section.
\begin{proof}[Proof of Theorem \ref{thm: sorting1}]
From Lemma \ref{lemma: sorting 1.3} we know that for every input array $C$, $\wkt{C}(\ea) < 50 N^2 \log N$. Thus, 
\begin{equation} \label{eq1}
    \E_C[\wkt{C}(\ea)] < 50 N^2 \log N
\end{equation}

We now turn out attention to bounding $\E[\wkt{C}(\eo)]$ from below. Consider any input array $C$, two arbitrary indices $a, b$ and the first pivot of quicksort $C[j]$. The probability that $j = a$ or $j = b$ is $2/N$, and when this happens $\Pr[Q(C[a], C[b], C[j], \eo) = \Pr[I(C[a], C[b], \eo)]$. Thus 
\begin{equation} \label{eq: sorting 1.7}
\Pr[Q(C[a], C[b], \eo)] > \Pr[Q(C[a], C[b], C[j], \eo) > \frac{2}{N}\cdot \Pr[I(C[a], C[b], \eo)] 
\end{equation}
Now, because we are reasoning about expectation over all inputs $C$, we have:
\begin{equation} \label{eq4}
\begin{split}
\E_C[\wkt{C}(\eo)] & = \frac{N(N+1)}{2}\cdot \E_{0 \le c,d \le 2^n-1}[|c - d| \cdot \Pr[Q(c,d,\eo)]] \\
& > \frac{N(N+1)}{2}\cdot \frac{2}{N} \E_{0 \le c,d \le 2^n-1}[|c - d| \cdot \Pr[I(c,d,\eo)]]
\end{split}
\end{equation}

Let us now bound $\E_{0 \le c,d \le 2^n-1}[|c - d| \cdot \Pr[I(c,d,\eo)]]$. As before, consider $1 \le i \le n$ as the index of the leading different bit of $c$ and $d$, denote $FD(c,d) = i$. We will reason about $\Pr[I(c,d, \eo)]$. When $i < n$ the probability that the comparison is wrong is $> 1/2^{n/2 + 2}$ since the comparison can be wrong at the very first bit. Thus, 

\begin{equation} \label{eq4}
\begin{split}
&\E_{0 \le c,d \le 2^n-1}[|c - d| \cdot \Pr[I(c,d,\eo)]] \\
 & = \E_{FD(c,d) = n}|c-d| \cdot \Pr[I(c,d, \eo) | FD(c,d) = n]\cdot \Pr[FD(c,d) = n]  \\
 &+ \E_{FD(c,d) < n}|c-d| \cdot \Pr[I(c,d, \eo) | FD(c,d) < n]\Pr[FD(c,d) < n] \\
 &> \E_{FD(c,d) = n}|c-d| \cdot \Pr[I(c,d, \eo) | FD(c,d) = n]\cdot \Pr[FD(c,d) = n]\\
 &= \frac{2^{n-1}}{3}. \frac{1}{2^{n/2 + 2}}\cdot \frac{1}{2} =  \frac{2^{n/2-4}}{3}
\end{split}
\end{equation}

Therefore, from equation \ref{eq3} and equation \ref{eq4}
\begin{equation} \label{eq5}
\E_C[\wkt{C}(\eo)] >  \frac{N+1}{48}2^{n/2} 
\end{equation}

The theorem follows from equation \ref{eq1} and equation \ref{eq5}.

\end{proof}

\section{Proof of Theorem \ref{thm: sorting_input} in section \ref{sec: sort}}
\label{appendix:sorting2}
First let us prove a lemma
\begin{lemma}\label{lemma: sorting2.1}
Consider an input array $C$ of $N$ integers drawn from the uniform random distribution over the set of all inputs. Then, $$\Pr\left[\frac{\wkt{C}(\eo)}{\wkt{C}(\ea)} = \Omega(\frac{2^{n/6}}{N\log N})\right] = 1 - O(\frac{N^2}{2^{n/3}})$$
\end{lemma}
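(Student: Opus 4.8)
The plan is to bound the numerator and denominator of the ratio separately, combining a deterministic upper bound on $\wkt{C}(\ea)$ with a high-probability lower bound on $\wkt{C}(\eo)$. For the denominator, Lemma \ref{lemma: sorting 1.3} already gives $\wkt{C}(\ea) \le 50 N^2 \log N$ for \emph{every} input array $C$, so no probabilistic argument is needed there. It therefore suffices to show that, with probability $1 - O(N^2/2^{n/3})$ over the random draw of $C$, we have $\wkt{C}(\eo) = \Omega(N\,2^{n/6})$; dividing the two bounds then yields the claimed ratio $\Omega(2^{n/6}/(N\log N))$.

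To lower bound $\wkt{C}(\eo)$ I would reuse the per-pair estimates from the proof of Theorem \ref{thm: sorting1}. For any pair $a \ne b$, equation \ref{eq: sorting 1.7} gives $\Pr[Q(C[a],C[b],\eo)] > \frac{2}{N}\Pr[I(C[a],C[b],\eo)]$, and whenever $C[a]$ and $C[b]$ agree on the most significant bit a single flip of that bit already forces the oblivious comparison to err, so $\Pr[I(C[a],C[b],\eo)] > 2^{-n/2-2}$. Restricting the sum in equation \ref{eq:wkt3} to pairs that (i) agree on the most significant bit and (ii) satisfy $|C[a]-C[b]| \ge 2^{2n/3}$, each retained pair contributes at least $2^{2n/3}\cdot \frac{2}{N}2^{-n/2-2} = \Theta(2^{n/6}/N)$ to $\wkt{C}(\eo)$. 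Hence it is enough to exhibit $\Omega(N^2)$ pairs meeting both (i) and (ii).

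The probabilistic core is to control the distances. For two independent uniform $n$-bit numbers the probability that they lie within $2^{2n/3}$ of each other is $O(2^{-n/3})$, so a union bound over the $\binom{N}{2}$ pairs shows that with probability $1 - O(N^2/2^{n/3})$ every pair satisfies condition (ii) simultaneously; this is exactly where the target failure probability and the threshold $2^{2n/3}$ (which fixes the final exponent $n/6 = 2n/3 - n/2$) come from. It remains to guarantee that $\Omega(N^2)$ of these far-apart pairs also satisfy (i). Since each number's most significant bit is an unbiased coin, a Chernoff bound makes both halves have size $\Theta(N)$ except with probability exponentially small in $N$, giving $\binom{\Theta(N)}{2} = \Omega(N^2)$ same-half (hence MSB-agreeing) pairs. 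On the intersection of these events $\wkt{C}(\eo) \ge \Omega(N^2)\cdot \Theta(2^{n/6}/N) = \Omega(N\,2^{n/6})$, and the ratio follows.

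The main obstacle is condition (i): the clean per-pair bound $\Pr[I(c,d,\eo)] = \Omega(2^{-n/2})$ holds only for pairs agreeing on the most significant bit, since a far-apart pair that \emph{differs} at the most significant bit but has aligned lower bits (for instance $0\cdots0$ versus $1\cdots1$) can have comparison-error probability as small as $2^{-(n+1)}$. Thus one cannot naively apply the error bound to all far pairs, and the argument must carry the MSB-agreement requirement through the counting. The delicate point is reconciling the Chernoff-type control of the MSB split, whose failure is exponentially small in $N$, with the stated failure probability $O(N^2/2^{n/3})$ — in particular handling the regime where $N$ is small relative to $n$ — which I expect to be the step requiring the most care.
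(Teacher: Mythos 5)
Your proposal tracks the paper's proof almost step for step: the deterministic bound $\wkt{C}(\ea) < 50N^2\log N$ from Lemma \ref{lemma: sorting 1.3}, the reduction $\Pr[Q(\cdot,\cdot,\eo)] > \frac{2}{N}\Pr[I(\cdot,\cdot,\eo)]$ from equation \ref{eq: sorting 1.7}, the restriction to pairs agreeing on the most significant bit so that $\Pr[I] > 2^{-n/2-2}$, and the threshold $2^{2n/3}$ with a union bound over pairs yielding failure probability $O(N^2/2^{n/3})$ --- all of these are exactly the paper's ingredients, including your observation that MSB-differing pairs must be excluded (the paper conditions on $\fd{C[a]}{C[b]} < n$ for the same reason).

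The one place you diverge is the counting of MSB-agreeing pairs, and the difficulty you flagged there is real: a Chernoff bound on the MSB split fails with probability $e^{-\Omega(N)}$, which is \emph{not} $O(N^2/2^{n/3})$ when $N$ is small relative to $n$ (say $N$ constant or $N = o(n)$), so as written your stated failure probability does not follow in that regime. But the Chernoff step is unnecessary, and this is how the paper closes the argument: the number of same-MSB pairs is $\binom{N_0}{2} + \binom{N_1}{2}$ where $N_0 + N_1 = N$ is the split, and this quantity is convex in $N_0$, hence \emph{minimized} at the balanced split, giving at least $2\binom{N/2}{2} = \frac{N(N-2)}{4} = \Omega(N^2)$ same-half pairs \emph{deterministically}, for every realization of $C$ (this is the paper's ``straight application of AM-GM''). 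Note the irony: a balanced split is the worst case, not the favorable one --- an unbalanced split only produces more same-half pairs --- so no probabilistic control of the split is needed at all. Replacing your Chernoff step with this convexity count leaves the distance union bound as the sole failure event, the probability stays $O(N^2/2^{n/3})$, and the rest of your computation ($\Omega(N^2)$ pairs each contributing $2^{2n/3}\cdot\frac{2}{N}\cdot 2^{-n/2-2} = \Theta(2^{n/6}/N)$, hence $\wkt{C}(\eo) = \Omega(N\,2^{n/6})$ and the ratio $\Omega(2^{n/6}/(N\log N))$) goes through verbatim.
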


\begin{proof}
From Lemma \ref{lemma: sorting 1.3} we know that $\wkt{C}(\ea) < 50 N^2 \log N$ for every input $C$. Thus we only need to calculate a concentration bound of $\wkt{C}(\eo)$ to arrive at our goal.

From equation \ref{eq: sorting 1.7} in the proof of Theorem \ref{thm: sorting1}, we know that for every pair of indices $a$ and $b$, $\Pr[Q(C[a], C[b], \eo)] > \frac{2}{N}$ $\cdot \Pr[I(C[a], C[b], \eo)]$, and thus 
\begin{equation} \label{eq2.1}
\begin{split}
    \wkt{C}(\eo) &= \sum_{a < b\le N} |C[a] - C[b]| \cdot \Pr[Q(C[a], C[b], \eo)] \\ 
    &> \frac{2}{N} \sum_{a < b\le N} |C[a] - C[b]| \cdot \Pr[I(C[a], C[b], \eo)]
\end{split}
\end{equation}

Let us consider the case when $C[a]$ and $C[b]$ have the first different bit index $< n$, i.e. $FD(C[a], C[b]) < n$. As stated before in the proof of Theorem \ref{thm: sorting1}, when $FD(C[a],C[b]) < n$ we have $\Pr[I(C[a], C[b], \eo)] > 1/2^{n/2 + 2}$. For $\frac{N(N+1)}{2}$ pairs $C[a], C[b]$ there are at least $N(N+2)/2$ pairs having the same leading bit (this is a straight application of AM-GM inequality). On the other hand, if $C[a]$ and $C[b]$ are random numbers from the uniform distribution with the same leading bit, $|C[a] - C[b]|$ can be seen as the absolute difference of 2 uniformly random $(n-1)$-bit numbers. Now, for two $(n-1)$-bit numbers $c < d$ uniformly random, the probability distribution of $(d-c)$ is a triangle that connects points $(0,0)$, $(0, 2^{n})$ and $(2^{n}, 0)$ in the coordinate system. From this probability distribution, we solve the probability that $d-c\le 2^{2n/3}$ or $b-a \ge 2^{n} - 2^{2n/3}$ is $\frac{1}{2^{n/3}}$.

Thus from equation \ref{eq2.1} and the union bound, with probability $> 1 - \frac{N^2}{2^{n/3}}$ we have
\begin{equation} \label{eq6}
\begin{split}
    \wkt{C}(\eo) 
    &> \frac{2}{N} \sum_{a < b\le N} |C[a] - C[b]| \cdot \Pr[I(C[a], C[b], \eo)] \\
    &> \frac{2}{N} \cdot \frac{N^2}{2} \cdot \frac{1}{2^{n/2 +2}} \cdot 2^{2n/3} \\
    & = \frac{N}{8}\cdot 2^{n/6}
\end{split}
\end{equation}

Thus, from Lemma \ref{lemma: sorting 1.3}
\begin{equation}
    \Pr\left[\frac{\wkt{C}(\eo)}{\wkt{C}(\ea)}  > \frac{N\cdot 2^{n/6}}{400 N^2 \log N}\right] > 1 - \frac{N^2}{2^{n/3}}
\end{equation}
and the result follows.
\end{proof}
Our desired result follows immediately after the above lemma.
\begin{proof}[Proof of theorem \ref{thm: sorting_input}]
From Lemma \ref{lemma: sorting2.1} the probability of an input $C$ drawn from the uniform random distribution over the set of all inputs being bad is $O(\frac{N^2}{2^{n/3}})$ and the theorem follows.
\end{proof}

\section{Proof of Theorem \ref{thm: truncate1} and Theorem \ref{thm: truncate2} in section \ref{sec: truncation}}
Similar to the previous section, we will first prove some useful lemmas.
\begin{lemma} \label{lemma: sorting2.1}
Given two random number $a$ and $b$, if the first different bit index $FD(a,b) = i \ge n - n/k$ then 
\begin{itemize}
    \item $\Pr[I(a,b,\et)] < \frac{n-i + 2}{2^{nk/2}}$
    \item $|a-b|\cdot \Pr[I(a,b,\et)] < \frac{2}{2^{n(k-2)/2}}$
\end{itemize}
\end{lemma}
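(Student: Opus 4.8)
The plan is to decompose the event $I(a,b,\et)$ that the inexact bit-wise comparison of $a$ and $b$ returns the wrong order according to the position at which the comparison is (incorrectly) decided. Write $i = \fd{a}{b}$ for the most significant bit at which $a$ and $b$ genuinely differ. The hypothesis $i \ge n - n/k$ places bit $i$, together with every more significant bit $j > i$, inside the high-energy block of $\et$, so each of these bits is read with error probability $p = 2^{-nk/2}$, whereas the low-order bits below the block carry essentially no energy. This is the single structural fact about $\et$ on which the whole argument rests, and it is what lets me import the case analysis of Lemma~\ref{lemma: sorting 1.1} with the graded energies of $\ea$ replaced by the uniform high energy $nk/2$.

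For the first bullet I would classify the three ways the comparison can err. (a) A more significant bit $j > i$, where $a$ and $b$ truly agree, acquires a spurious misleading difference; this needs exactly one of the two reads at position $j$ to flip and then to point the wrong way, contributing at most $p(1-p) \le p$ per bit and hence at most $(n-i)p$ over the $n-i$ bits above $i$. (b) Bit $i$ is reversed, which requires both of its reads to flip, contributing $p^2$. (c) Bit $i$ is masked into apparent equality, which requires exactly one of its two reads to flip (probability $2p(1-p)$), after which the decision is deferred to the lower bits and is wrong with probability at most $1$. A union bound over these disjoint cases then gives
\[
\Pr[I(a,b,\et)] \le (n-i)p + p^2 + 2p(1-p) = (n-i+2)p - p^2 < \frac{n-i+2}{2^{nk/2}},
\]
the strictness being supplied by the $-p^2$ term. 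This is exactly the first claim.

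For the second bullet I would first note that $\fd{a}{b} = i$ forces $a$ and $b$ to agree on every bit above position $i$, so their difference is governed by bit $i$ and below, giving $|a-b| \le 2^i - 1 < 2^i$. Combining this with the first bullet,
\[
|a-b| \cdot \Pr[I(a,b,\et)] < 2^i \cdot \frac{n-i+2}{2^{nk/2}} = (n-i+2)\,2^{\,i-nk/2}.
\]
It then remains only to verify the elementary inequality $2^i(n-i+2) \le 2^{n+1}$; setting $m = n-i \ge 0$ this reads $(m+2) \le 2^{m+1}$, which holds for all $m \ge 0$. Substituting yields $|a-b|\cdot\Pr[I(a,b,\et)] < 2^{\,n+1-nk/2} = \tfrac{2}{2^{n(k-2)/2}}$, as required.

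The main obstacle I anticipate is bookkeeping at the boundary of the energy block rather than anything conceptual. The hypothesis is written $i \ge n - n/k$, and I must make sure that bit $i$ itself (not merely the bits strictly above it) lies in the high-energy region: if bit $i$ were a zero-energy bit, its reversal probability in case (b) would be $\Theta(1)$ and the first bullet would collapse. I would therefore fix the convention that the high-energy block consists of the bits of index at least $n - n/k$ and confirm that this off-by-one is consistent with the definition of $\et$. Once that is pinned down, what remains is the routine case analysis above and the one-line inequality $(m+2)\le 2^{m+1}$.
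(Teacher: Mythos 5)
Your proof is correct and takes essentially the same route as the paper's: the same decomposition of $I(a,b,\et)$ into errors before, at, and after the leading differing bit $i$ with per-read error probability $2^{-nk/2}$, union-bounded to $(n-i+2)/2^{nk/2}$, followed by $|a-b| < 2^i$ and maximizing $2^i(n-i+2)$ at $i=n$ (the paper asserts the monotonicity, you verify it via $(m+2)\le 2^{m+1}$). Your boundary concern about whether bit $i$ itself lies in the high-energy block is well placed --- the paper silently assumes it does (and its Section~\ref{sec: truncation} per-bit energy $2nk$ even conflicts with the $nk/2$ used in the proof) --- but your resolution matches the paper's intent and the argument goes through unchanged.
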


\begin{proof}
When $n/k$ bits has energy allocated, each bit gets $nk/2$ energy and therefore the probability that the reader reads that bit wrongly is $\frac{1}{2^{nk/2}}$. Consider an arbitrary input array $C$. Let us first examine $\Pr[I(a,b)]$ for $b> a$ arbitrary elements in $C$. Consider the index of the leading different bit of $a$ and $b$ $i$, i.e. $FD(a,b) = i$. For each index $j > i$, the probability that the comparison is incorrect at bit $i$ is $< 1/2^{nk/2}$. If $i \ge n - n/k$, the probability that the comparison is incorrect before getting to bit $i$ is $< \frac{n-i}{2^{nk/2}}$. The probability that the comparison is incorrect at bit $i$ is $< \frac{1}{2^{nk}}$, and the probability that the comparison is incorrect after bit $i$ is $< 1/2^{nk/2}$. Overall, we have the probability that the comparison is incorrect is $< \frac{n-i+2}{2^{nk/2}}$ if $FD(a,b) = i \ge n - n/k$.

The second statement follows from the first one. We note that if the leading different bit of $a,b$ is $i$ then $|a-b| < 2^i$. Thus we have $|a-b|.\Pr[I(a,b,\pi_t)] < 2^i \cdot \frac{n-i+2}{2^{nk/2}}$. This expression is biggest when $i = n$, thus $|a-b|.\Pr[I(a,b,\pi_t)] < \frac{2}{2^{n(k-2)/2}}$ if the index of the leading different bit of $a,b$ is $\ge n-n/k$.
\end{proof}

\begin{lemma}\label{lemma: sorting2.2}
Given a random input array $C$ of $N$ $n$-bit numbers and two random index $a, b < N$. Then if $FD(C[a], C[b]) = i \ge n - n/k$ then $\Pr[Q(C[a],C[b], \et)]\cdot |C[a] - C[b]| < \frac{2.8 \log N}{2^{n(k-2)/2}}$.
\end{lemma}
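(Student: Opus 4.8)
The plan is to mirror the structure of the proof of Lemma~\ref{lemma: sorting 1.3}, replacing the per-pivot bound valid for $\ea$ with one appropriate to $\et$, and then multiply by the expected quicksort recursion depth. Concretely, I would fix a single pivot $C[j]$ and first establish the per-pivot estimate
\[
|C[a]-C[b]|\cdot\Pr[Q(C[a],C[b],C[j],\et)] < \frac{2}{2^{n(k-2)/2}}.
\]
Summing over the at most $1.4\log N$ pivots against which $C[a]$ and $C[b]$ are compared during quicksort (the expected recursion depth, as used in Lemma~\ref{lemma: sorting 1.3}) then yields $|C[a]-C[b]|\cdot\Pr[Q(C[a],C[b],\et)] < 1.4\log N\cdot\frac{2}{2^{n(k-2)/2}} = \frac{2.8\log N}{2^{n(k-2)/2}}$, which is exactly the claim.

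For the per-pivot estimate I would split into the same four cases as Lemma~\ref{lemma: sorting 1.3}, according to where the pivot sits relative to $C[a]$ and $C[b]$. Write $i = FD(C[a],C[b]) \ge n - n/k$, so that $|C[a]-C[b]| < 2^i$ and $C[a],C[b]$ share a common prefix above bit $i$. When $a=j$ or $b=j$ (Case 4), $\Pr[Q]=\Pr[I(C[a],C[b],\et)]$ and the bound is the second inequality of the preceding lemma (Lemma~\ref{lemma: sorting2.1}) applied verbatim. When the pivot lies on one side, say $C[a]<C[b]<C[j]$ (Case 1, and symmetrically Case 2), the split is wrong only if $I(C[a],C[j],\et)$ occurs; the key observation is that the leading different bit of $C[a]$ and $C[j]$ is still at least $i$ (they either first differ at bit $i$ itself, or higher up where $C[j]$ departs from the shared prefix), so the preceding lemma applies to the pair $(C[a],C[j])$, and since $|C[a]-C[b]|\le|C[a]-C[j]|$ the desired bound follows immediately.

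The main obstacle is Case 3, where $C[a]<C[j]<C[b]$ and \emph{both} comparisons must fail, so $\Pr[Q]=\Pr[I(C[a],C[j],\et)]\cdot\Pr[I(C[b],C[j],\et)]$. Because $C[j]$ lies strictly between $C[a]$ and $C[b]$ it is forced to share their prefix above bit $i$, and depending on the value of its bit $i$ exactly one of the pairs $(C[a],C[j])$, $(C[b],C[j])$ first differs at bit $i$, while the other may first differ deep in the zero-energy region — precisely where the preceding lemma offers no control (indeed zero-energy bits carry error probability $\EF(0)=1$). The resolution I would adopt is that it suffices to control the straddling pair: for it the first inequality of the preceding lemma gives $\Pr[I] < \frac{n-i+2}{2^{nk/2}}$, and bounding the remaining, possibly large, factor trivially by $1$ gives
\[
|C[a]-C[b]|\cdot\Pr[Q] < 2^i\cdot\frac{n-i+2}{2^{nk/2}} \le \frac{2\cdot 2^n}{2^{nk/2}} = \frac{2}{2^{n(k-2)/2}},
\]
where the middle inequality follows by maximizing $2^i(n-i+2)$ over $i\in[n-n/k,\,n]$, the maximum occurring at the endpoint $i=n$. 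The conceptual point is that the straddling comparison carries probability only about $2^{-nk/2}$, which for $k>2$ dominates the factor $2^i\le 2^n$ even after the second comparison probability is discarded. This closes Case~3, hence all four cases, establishing the per-pivot estimate and thereby the lemma.
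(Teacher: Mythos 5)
Your proposal is correct and follows essentially the same route as the paper's proof: the same four-case analysis by pivot position, the same application of Lemma~\ref{lemma: sorting2.1} (the distance-weighted inequality for the one-sided pivot cases and Case~4, and the first inequality with the other error factor dropped for the straddling case, maximizing $2^i(n-i+2)$ at $i=n$), followed by summing the per-pivot bound over the expected $1.4\log N$ recursion depth. Your treatment is in fact slightly more careful than the paper's, in explicitly identifying which of the two pivot pairs inherits $FD = i$ in Case~3 and in verifying the endpoint maximization, but the argument is the same.
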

\begin{proof}
We will use results from Lemma \ref{lemma: sorting2.1}. Let us again look at the 4 cases of pivot comparison. This time, we note that if $a < b < j$, $FD(a,j) \ge FD(a,b)$ (this is straightforward to check) and if $a < j < b$, either $FD(a,j)$ or $FD(b,j) = FD(a,b)$. We also note that if $FD(a,b) < n - n/k$, the comparison will be always be false since the bits after index $n - n/k$ always get flipped. 
\begin{itemize}
    \item Case 1: $C[a] < C[b]< C[j]$. In this case $Q(C[a],C[b], C[j], \ea)$ happens if only $I(C[a], C[j])$ happens. Therefore in this case 
    \begin{equation*}
        \begin{split}
            \Pr[Q(C[a],C[b], C[j], \et)] &= \Pr[I(C[a],C[j], \et)] \cdot(1-\Pr[I(C[b],C[j], \et)]) \\
            &< \Pr[I(C[a],C[j], \et)]
        \end{split}
    \end{equation*}
    If $FD(a,b) \ge n - n/k$ then $FD(a,j) \ge n - n/k$, thus $\Pr[I(C[a],C[j], \et)] < \frac{2}{2^{n(k-2)/2}}\cdot \frac{1}{C[j] - C[a]} < \frac{2}{2^{n(k-2)/2}}\cdot \frac{1}{C[b] - C[a]}$, thus  $\Pr[Q(C[a],C[b], C[j], \et)]\cdot|C[a] - C[b]| < \frac{2}{2^{n(k-2)/2}}$.
    \item Case 2: $C[j] < C[a] < C[b]$. In this case $Q(C[a],C[b],C[j], \ea)$ happens if only $I(C[b],C[j], \et)$ happens. Therefore again 
    \begin{equation*}
    \begin{split}
        \Pr[Q(C[a],C[b],C[j], \et)] & = \Pr[I(C[b],C[j], \et)].(1-\Pr[I(C[a],C[j], \et)])\\
        &< \Pr[I(C[b],C[j], \et)]
    \end{split}
    \end{equation*}
    Similar to the above case, in this case $\Pr[Q(C[a],C[b], C[j], \et)]\cdot|C[a] - C[b]| < \frac{2}{2^{n(k-2)/2}}$.
    \item Case 3: $C[a] < C[j] < C[b]$. In this case $Q(C[a],C[b],C[j], \ea)$ happens if both $I(C[b],C[j], \ea)$ and $I(C[a],C[j], \ea)$ happen. Therefore,
    \begin{equation*}
    \begin{split}
    \Pr[Q(C[a],C[b],C[j], \ea)] &= \Pr[I(C[b],C[j], \ea)].\Pr[I(C[a],C[j], \ea)]
    \end{split}
    \end{equation*}
    Now, note that if $C[a] < C[j] < C[b]$, either $FD(C[a],C[j])$ or $FD(C[b], C[j]) = FD(C[a], C[b])$. Assume it is $FD(C[a],C[j])$, then if $FD(C[a],C[b]) \ge n - n/k$ then so is $FD(C[a],C[j])$. Thus we have $\Pr[Q(C[a],C[b],C[j], \ea)] \le \Pr[I(C[a],C[j], \ea)] < \frac{n-i +2}{n^{nk/2}}$, and so $\Pr[Q(C[a],C[b],$ $C[j], \ea)]\cdot |C[a] - C[b]| < 2^i \cdot \frac{n-i +2}{n^{nk/2}} \le \frac{2}{2^{n(k-2)/2}}$ as proved in lemma \ref{lemma: sorting2.1}.
    \item Case 4: either $a = j$ or $b = j$. In this case, $\Pr[Q(C[a],C[b],C[j], \ea)] = \Pr[I(C[a],C[b], \ea)]$ and so $\Pr[Q(C[a],C[b],C[j], \ea)] \cdot |C[a] - C[b]| < \frac{2}{2^{n(k-2)/2}}$.
\end{itemize}
Therefore, if $FD(C[a], C[b]) = i \ge n - n/k$ then $\Pr[Q(C[a],C[b],C[j], \ea)] \cdot |C[a] - C[b]| < \frac{2}{2^{n(k-2)/2}}$. Thus, 
\begin{equation}
    \begin{split}
        \Pr[Q(C[a],C[b], \et)]\cdot |C[a] - C[b]| & \le \sum_j \Pr[Q(C[a],C[b], C[j], \et)]\cdot |C[a] - C[b]| \\
        &< 1.4 \log N \frac{2}{2^{n(k-2)/2}}
    \end{split}
\end{equation}

\end{proof}

Now we can prove theorem \ref{thm: truncate1}. The basic idea is that given two random numbers, the probability that their leading different bit is outside of the $\frac{n}{k}$ most significant bits is low. 

\begin{proof} [Proof of Theorem \ref{thm: truncate1}]
From Lemma \ref{lemma: sorting2.2}, we have for arbitrary input array $C$, $\Pr[Q(C[a],C[b], \et)]\cdot |C[a] - C[b]| < \frac{2.8 \log N}{2^{n(k-2)/2}}$ if $FD(C[a], C[b]) \ge n - n/k$.

On the other hand, for any pair of number $a,b$ the probability that $FD(a,b) < n - n/k$  is $\frac{1}{2^{n/k}}$. Therefore, for random $a,b$ from the uniform random distribution, 
\begin{equation}
    \Pr\left[|a-b|.\Pr[Q(a,b,\et)] > 2.8 \frac{\log N}{2^{n(k-2)/2}}\right] < \frac{1}{2^{n/k}}
\end{equation}

Thus, from the union bound,
\begin{equation} \label{eq7}
    \Pr\left[\wkt{C}(\et) = \sum_{a,b \in C}|a-b|\cdot \Pr[Q(a,b,\et)] = \Omega(\frac{N^2\log N}{2^{n(k-2)/2}}) \right] =  O(\frac{N^2/2}{2^{n/k}})
\end{equation}

On the other hand, from equation \ref{eq6} in the proof of Theorem \ref{thm: sorting1}, we know that for an input $C$ from the uniform random distribution:

\begin{equation} \label{eq8}
    \Pr[\wkt{C}(\eo) =  O(N\cdot 2^{n/6})] < \frac{N^2}{2^{(n-2)/3}}
\end{equation}
Therefore, from equations \ref{eq7} and \ref{eq8} and the union bound we have:
\begin{equation}
    \Pr\left[\frac{\wkt{C}(\eo)}{\wkt{C}(\et)} = O(\frac{2^{n(k-5/3 )/6}}{N\log N})\right] = O(\frac{N^2}{2^{\frac{n}{max(3,k)}}})
\end{equation}
and the theorem follows.
\end{proof}

We will use the above lemmas to prove Theorem \ref{thm: truncate2} also. 
\begin{proof} [Proof of Theorem \ref{thm: truncate2}]
From equation \ref{eq5} in the proof of theorem \ref{thm: sorting1} we know that $\E_C[\wkt{C}(\eo)] = \Omega(N\cdot 2^{n/2})$. We want to limit $\E_C[\wkt{C}(\et)]$ to be asymptotically $O(2^{n/h})$ where $h > 2$ so that the ratio $\frac{\E_C[\wkt{C}(\eo)]}{\E_C[\wkt{C}(\et)]}$ is exponential in $n$. From equation \ref{eq:wkt3}, we only have to limit $\E_{a \ne b}[|a-b|\cdot\Pr[Q(a,b, \et)]]$.

\begin{equation}
\begin{split}
    \E_{a \ne b}[|a- b|\Pr[Q(a,b, \pi_k)]] & = \Pr[FD(a,b) \ge n - n/k]\cdot\E_{a,b|FD(a,b) \ge n - n/k}\left[|a-b|\cdot\Pr[Q(a,b, \et)]\right]\\
    & + \Pr[FD(a,b) < n - n/k]\cdot\E_{a,b|FD(a,b) < n - n/k}\left[|a-b|\cdot\Pr[Q(a,b, \et)]\right] \\
    & <  \E_{a,b|FD(a,b) \ge n - n/k}\left[|a-b|\cdot\Pr[Q(a,b, \et)]\right] \\
    &+ \sum_{i < n - n/k}\Pr[FD(a,b) = i]\E_{a,b|FD(a,b) = i}|a-b|
\end{split}
\end{equation}

For the first sum, from Lemma \ref{lemma: sorting2.2} we have that $|a-b|\cdot\Pr[Q(a,b, \et)] < 1.4 \log N \frac{2}{2^{n(k-2)/2}} $ for any input array $C$ containing numbers $a,b$. It is clear that this sum is a constant for any $k > 2$.

For the second sum, we have
\begin{align*}
    \sum_{i < n - n/k}\Pr[FD(a,b) = i]\E_{a,b|FD(a,b) = i}|a-b| < \sum_{i < n- n/k}1/2^{n+1-i} \cdot 2^{i+1}\\
    = \sum_{i < n- n/k} 2^{2i - n} = \frac{2^{2(n- n/k) - n + 2}}{3}  = \frac{2^{n- 2n/k + 2}}{3}
\end{align*}

Therefore,
$$ \E_C[\wkt{C}(\et)] = \Omega(N^2\cdot 2^{n- 2n/k})$$

In order for this sum to be asymptotically $O(2^{n/2})$, $k$ must be $< 4$.
\end{proof}

\end{document}